\theoremstyle{theorem}
\newtheorem{lemma}{Lemma}
\newtheorem{theorem}{Theorem}
\newtheorem{corollary}{Corollary}
\begin{document}
\title{Heterogeneous Cellular Networks with Flexible Cell Association: A Comprehensive Downlink SINR Analysis}
\author{Han-Shin Jo, Young Jin Sang, Ping Xia, and Jeffrey G. Andrews
\thanks{H. S. Jo, P. Xia, and J. G. Andrews are with Department of Electrical and Computer Engineering, The University of Texas at Austin, USA
(e-mail: han-shin.jo@austin.utexas.edu, pxia@mail.utexas.edu and jandrews@ece.utexas.edu). Young Jin Sang is with Department of Electrical and Electronic Engineering, Yonsei University, Korea. (e-mail: yjmich@dcl.yonsei.ac.kr). This research was supported by Motorola Solutions, Arlington Heights, IL, and the USA's National Science Foundation, CIF-1016649. Manuscript last updated: \today}}
\maketitle
\begin{abstract}
In this paper we develop a tractable framework for SINR analysis in downlink heterogeneous cellular networks (HCNs) with flexible cell association policies. The HCN is modeled as a multi-tier cellular network where each tier's base stations (BSs) are randomly located and have a particular transmit power, path loss exponent, spatial density, and bias towards admitting mobile users.  For example, as compared to macrocells, picocells would usually have lower transmit power, higher path loss exponent (lower antennas), higher spatial density (many picocells per macrocell), and a positive bias so that macrocell users are actively encouraged to use the more lightly loaded picocells.  In the present paper we implicitly assume all base stations have full queues; future work should relax this.  For this model, we derive the outage probability of a typical user in the whole network or a certain tier, which is equivalently the downlink SINR cumulative distribution function. The results are accurate for all SINRs, and their expressions admit quite simple closed-forms in some plausible special cases. We also derive the \emph{average ergodic rate} of the typical user, and the \emph{minimum average user throughput} -- the smallest value among the average user throughputs supported by one cell in each tier. We observe that neither the number of BSs or tiers changes the outage probability or average ergodic rate in an interference-limited full-loaded HCN with unbiased cell association (no biasing), and observe how biasing alters the various metrics.
\end{abstract}
\IEEEpeerreviewmaketitle

\section{Introduction}\label{sec:intro}
Heterogeneous cellular networks (HCNs) comprise a conventional cellular network overlaid with a diverse set of lower-power base stations (BSs) such as picocells \cite{Pico97}, femtocells \cite{ChaAndGat08,JoYook09,JoYook10}, and perhaps relay BSs \cite{Relay09}.  Heterogeneity is expected to be a key feature of 4G cellular networks, and an essential means for providing higher end-user throughput \cite{ParDah11,QcommHetNet} as well as expanded indoor and cell-edge coverage. The \emph{tiers} of BSs are ordered by transmit power with tier 1 having the highest power.  Due to differences in deployment, they also in general will have differing path loss exponents and spatial density (e.g. the number of BSs per square kilometer).  Finally, in order to provide relief to the macrocell network -- which is and will continue to be the main bottleneck -- lower tier base stations are expected to be designed to have a bias towards admitting users \cite{ParDah11}, since their smaller coverage area usually results in a lighter load. For example, as shown in Fig. \ref{fig:BiasingView}, a picocell may claim a user even though the macrocell signal is stronger to the user.  The goal of this paper is to propose and develop a model and analytical framework that successfully characterizes the signal-to-noise-plus-interference ratio (SINR) -- and its derivative metrics like outage/coverage and data rate -- in such a HCN with arbitrary per-tier association biases.

\subsection{Motivation and Related Work}
The SINR statistics over a network are, unsurprisingly, largely determined by the locations of the base stations (BSs).  These locations are usually unknown during the design of standards or even a specific system, and even if they are known they vary significantly from one city to the next.  Since the main aspects of the system must work across a wide variety of base station deployments, the BS locations are usually abstracted to some baseline model. The most popular such abstraction is the hexagonal grid model, which has been used extensively both in industry \cite{ChiLin04,JaiTam08,IkuRup10} and academia \cite{WuSze99,Kar99}, over at least the past three decades and it remains ubiquitous up to the present day. The SINR expressions resulting from such a model are complex and depend on multiple random variables, and are so metrics of interest are usually estimated by Monte Carlo methods. This complicates the understanding of how the various system parameters affect such metrics. Furthermore, given the previously mentioned deployment trends, the continued relevance of the hexagonal grid model is open to debate.

An alternative is to model the BS locations as random and drawn from a spatial stochastic process, such as the Poison point process (PPP).  Such an approach has been advocated as early as in 1997 \cite{Fleming97,BacCellular,Bro00} even for traditional coverage-focused deployments. Recent work has found the complete downlink SINR distribution under fairly general assumptions, and shown that at least in one fairly regular macrocell deployment, that it is about as accurate as the grid model \cite{AndBac10}.  In the interim, the PPP model has been used extensively for modeling unplanned networks such as femtocells \cite{ChaAnd09a,ChaAnd09b,ChaKou09} or ad hoc networks \cite{BacNOW,HaeAnd09}; and for both it is now fairly well-accepted \cite{AndGan10,WinPin09}.

In the context of the heterogeneous deployments of interest in the present paper, a tractable model for SINR in a general $K$-tier downlink HCN has recently been found under certain assumptions \cite{DhiAnd11,DhiAndJ11}. The results are exact for positive SINR (in dB) only, do not include biasing, and the per-tier coverage probabilities are not given.   Clearly, it is desirable to develop a more general analytical model which is accurate for all SINRs allows for a flexible cell association.  This would allow for the consideration of biasing in the overall system design, which is considered a very important factor in the overall performance of an HCN \cite{BiasAlcatel10,BiasHuawei10,BiasNTT10,Sal_PIMRC10}.  We also note that independently, \cite{AndBac10} has been extended to the case of an unbiased two-tier network, where both macro BSs and femto BSs are located according to a PPP \cite{Muk11a}.

\subsection{Contributions}
We present a novel analytical model and set of results for SINR in downlink HCNs with flexible cell association.  The model characterizes the $K$ tiers of a cellular network by the transmit power, BS spatial density, path loss exponent, and bias factor. We assume that in addition to path loss that there is Rayleigh fading\footnote{Lognormal shadowing on both the desired and interference signals was considered in \cite{AndBac10} for one-tier, but significantly degrades tractability while not changing the main trends. We neglect it here for expediency, but it could be handled in a straightforward manner following the approach of \cite{AndBac10}.}. Unlike \cite{DhiAnd11,DhiAndJ11} where each mobile user connects to the BS offering the highest \emph{instantaneous} SINR, we assume the user connects to the BS that offers the maximum \emph{long-term averaged} received power (with biasing), i.e. fading is ignored. This results in a completely new approach to modeling and analysis of the HCNs. Specifically, we provide the following theoretical contributions.

First, we derive a pair of prerequisite quantities that are of interest in their own right. These are the (i) \emph{per-tier association probability}, which is the probability that a typical user is associated with each tier,  , and (ii) the \emph{distance to the serving BS} for a given bias. These are both probabilistic functions. From the per-tier association probability, the average number of users associated with a BS in each tier is derived, which captures the effect of cell association on the \emph{cell load} of each tier.

Next, we derive the complete \emph{outage probability} over all SINRs, with arbitrary biasing. The outage probability relative to an SINR threshold is equivalently the CDF of SINR for a randomly selected mobile in the network (or a certain tier). Although the results are not closed-form for the most general case, they are easily computable. In addition, they reduce to closed-form in several cases, for example an interference-limited network with the same per-tier path loss exponents and no biasing. In that particular case, the outage probability does not depend on the spatial density, transmit power or even the number of tiers.  Intuitively, adding more infrastructure of any type has a negligible affect on the coverage or outage, so primarily serves to provide more throughput (since more users can be served simultaneously as the number of base stations increases).

Finally, we derive two measures of spectral efficiency in our proposed HCN model: the \emph{average ergodic rate} of a randomly chosen user in the whole network (or a certain tier), and the \emph{minimum average user throughput} -- the smallest value among the average user throughputs supported by one cell in each tier. The average ergodic rate, like the outage probability, is independent of BS transmit power, BS density, and the number of tiers $K$ in interference-limited HCN without biasing, whereas it is strictly worsened by biasing in a fully-loaded (full queue) network. However, rather than drawing conclusions about the relative merits of different bias values, the main contribution of this paper is to provide a tractable SINR framework for doing so in future work, since it seems clear that the loading on each tier will significantly affect the cell association policies.

\section{Downlink System Model}\label{sec:model}
A fairly general model of an HCN would include $K$ tiers of BSs that are distinguished by their spatial densities, transmit powers, path loss exponents, and biasing factors. For example, a high-power macrocell BS network is overlaid with successively denser and lower power picocells (with positive biasing) and femtocells (also with positive biasing) in a plausible $K=3$ scenario as seen in Fig. \ref{fig:Biasing}.
Femto BS locations will generally be unplanned and so are well-modeled by a spatial random process \cite{ChaAnd09b,ChaKou09,Muk11a,KimHon10b}.  Perhaps more surprisingly, macrocell BSs may also be reasonably well-modeled by a random spatial point process, with about the same as accuracy as the typical grid model \cite{AndBac10,BacCellular,Fleming97}. Under this model, the positions of BSs in the $j$th tier are modeled according to a homogeneous PPP $\Phi_j$ with intensity $\lambda_j$ and users also are located according to a homogeneous PPP $\Phi^{(\mathrm{u})}$ with intensity $\lambda^{(\mathrm{u})}$ that is independent of $\{\Phi_j\}_{j=1, \cdots, K}$.

The downlink desired and interference signals experience path loss, and for each tier we allow different path loss exponents $\{\alpha_j\}_{j=1,\cdots,K}>2$, which is permissible in consideration with empirical value in cellular network \cite{3GPP36814}. Rayleigh fading with unit average power models the random channel fluctuations. No intra-cell interference is considered, i.e. orthogonal multiple access is employed within a cell. The noise is additive and has constant power $W$ but no specific distribution is assumed. Every BSs in the $j$th tier uses the same transmit power $\{P_j\}_{j=1,\cdots,K}$.

Denote $k \in \{1,\cdots,K\}$ as the index of the tier with which a typical user associated.
To evaluate the outage probability and achievable rate, we shift all point process so that a typical user (receiver) lies at the origin. Despite the shift, the distribution of potential interfering BS is still a homogeneous PPP with the same intensity. We thus denote $|Y_{ki}|$ as the distance from BS $i\in \Phi_k$ to the origin, i.e. the typical user. Also denote $\{R_j\}_{j=1,\cdots,K}$ as a distance of the typical user from the nearest BS in the $j$th tier, which is a main parameter for the proof following Lemma 1 and 3.

\subsection{Flexible Cell Association and Cell Load}
We assume open access, which means a user is allowed to access any tier's BSs.  This may not be true in some cases (notably femtocells) but provides best-case coverage, and we leave extension to closed-access to future work. We consider a cell association based on maximum biased-received-power (BRP) (termed {\it biased association}), where a mobile user is associated with the strongest BS in terms of BRP at the user. The BRP $\{P_{\mathrm{r},j}\}_{j=1,\cdots,K}$ is
\begin{equation}\label{eq:Pr_j}
P_{\mathrm{r},j} = P_j L_0 (R_j/r_0)^{-\alpha_j}B_j,
\end{equation}
where $L_0$ is the path loss at a reference distance $r_0$ (typically about $(4\pi/\nu)^{-2}$ for $r_0=1$, where $\nu$ denotes the wavelength). All BSs in the $j$th tier adopt identical bias factor $B_j$, which is positive value. When $\{B_j\}_{j=1,\cdots,K}=1$, i.e. no biasing, biased association is the same as conventional cell association (termed {\it unbiased association}), where the user connects to the BS that offers the strongest average power to the user. In particular, when $B_j=1/P_j$ for all $j=1,\cdots,K$, the user connects to the BS with the lowest path loss.
Employing $B_j>1$ extends the cell range (or coverage) of the $j$th tier BSs as shown in Fig. \ref{fig:Biasing}. This allows the tier selection to be tuned for cell load balancing or other purposes.

For clarity of exposition, we define
\begin{align}
\widehat{P}_j\triangleq\frac{P_j}{P_k}, ~\widehat{B}_j \triangleq\frac{B_j}{B_k}, ~\widehat{\alpha}_j \triangleq\frac{\alpha_j}{\alpha_k},
\end{align}
which respectively characterizes transmit power ratio, bias power ratio and path loss exponent ratio of interfering to serving BS. Note that $\widehat{P}_k=\widehat{B}_k=\widehat{\alpha}_k=1$.

Each tier's BS density and transmit power as well as cell association determine the probability that a typical user is associated with a tier. The following lemma provides the \emph{per-tier association probability}, which is essential for deriving the main results in the sequel.
\begin{lemma}
The probability that a typical user is associated with the $k$th tier is
\begin{align}\label{eq:Lem1}
\mathcal{A}_k = 2\pi \lambda_k \int_0^{\infty} r \exp \left\{-\pi \textstyle\sum_{j=1}^{K} \lambda_j (\widehat{P}_j \widehat{B}_j)^{2/\alpha_j} r^{2/\widehat{\alpha}_j} \right \} \mathrm{d}r.
\end{align}
If $\{\alpha_j\}=\alpha$, the probability is simplified to
\begin{align}\label{eq:Lem1-0}
\mathcal{A}_k = \left(1+ \frac{\sum_{j=1,j\neq k}^{K} \lambda_j (P_j B_j)^{2/\alpha}} {\lambda_k (P_k B_k)^{2/\alpha}} \right)^{-1} .
\end{align}
\end{lemma}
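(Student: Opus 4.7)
The plan is to translate the max-biased-received-power association rule into a purely geometric condition on the nearest-BS distances $\{R_j\}_{j=1,\ldots,K}$, then exploit independence across tiers.

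First, I observe that because $P_j L_0 (r/r_0)^{-\alpha_j} B_j$ is strictly decreasing in $r$, the strongest BS in tier $j$ (in the BRP sense) is necessarily the nearest one, located at distance $R_j$. Hence the event $\mathcal{E}_k$ that a typical user associates with tier $k$ is exactly
\begin{equation*}
\mathcal{E}_k = \bigcap_{j\neq k} \left\{P_k B_k R_k^{-\alpha_k} > P_j B_j R_j^{-\alpha_j}\right\},
\end{equation*}
which rearranges to $R_j > (\widehat{P}_j \widehat{B}_j)^{1/\alpha_j} R_k^{1/\widehat{\alpha}_j}$ for every $j\neq k$.

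Next I would use the standard void-probability of a 2D homogeneous PPP: since $\Phi_j$ is PPP$(\lambda_j)$, the nearest-BS distance satisfies $\mathbb{P}(R_j > r) = \exp(-\pi\lambda_j r^2)$, with density $f_{R_j}(r) = 2\pi\lambda_j r \exp(-\pi\lambda_j r^2)$. Crucially, $\{\Phi_j\}_{j=1,\ldots,K}$ are independent, so $\{R_j\}$ are independent. Conditioning on $R_k = r$ and using this independence,
\begin{equation*}
\mathbb{P}(\mathcal{E}_k \mid R_k = r) = \prod_{j\neq k} \mathbb{P}\!\left(R_j > (\widehat{P}_j\widehat{B}_j)^{1/\alpha_j} r^{1/\widehat{\alpha}_j}\right) = \exp\!\left\{-\pi \sum_{j\neq k} \lambda_j (\widehat{P}_j\widehat{B}_j)^{2/\alpha_j} r^{2/\widehat{\alpha}_j}\right\}.
\end{equation*}
Then I unconditon by integrating against $f_{R_k}(r)$. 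Absorbing the $j=k$ factor $\exp(-\pi\lambda_k r^2)$ (which coincides with the $j=k$ term of the exponent since $\widehat{P}_k=\widehat{B}_k=\widehat{\alpha}_k=1$) into the sum yields the expression \eqref{eq:Lem1}.

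For the special case $\{\alpha_j\}=\alpha$, I have $\widehat{\alpha}_j = 1$ so $r^{2/\widehat{\alpha}_j}=r^2$, and the integral reduces to the elementary Gaussian-in-$r^2$ form $\int_0^\infty r e^{-Cr^2}\,\mathrm{d}r = 1/(2C)$ with $C = \pi \sum_j \lambda_j (\widehat{P}_j\widehat{B}_j)^{2/\alpha}$; multiplying numerator and denominator by $(P_k B_k)^{2/\alpha}$ gives \eqref{eq:Lem1-0}. There is no real obstacle here; the only subtle point is recognizing that the nearest BS in each tier is automatically the strongest candidate in that tier, so I only need to consider the $K$-vector $(R_1,\ldots,R_K)$ rather than the full point processes, after which independence across tiers does all the work.
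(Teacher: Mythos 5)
Your proof is correct and follows essentially the same route as the paper's: reduce the association event to the per-tier nearest-distance conditions $R_j > (\widehat{P}_j\widehat{B}_j)^{1/\alpha_j} R_k^{1/\widehat{\alpha}_j}$, use independence of the tiers' point processes together with the PPP void probability, integrate against $f_{R_k}$, and absorb the $j=k$ term into the sum. Your explicit remark that the strongest BS of each tier in the BRP sense is its nearest BS is a point the paper leaves implicit, but it does not change the argument.
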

\begin{proof}See Appendix~\ref{sec:PFLem1}.
\end{proof}
Lemma 1 confirms the common sense that a user prefers to connect to a tier with higher BS density, transmit power, and bias. We further observe that BS density is more dominant in determining $\mathcal{A}_k$ than BS transmit power or bias factor because $\frac{2}{\alpha}<1$ ($\alpha>2$) both in practice and in our network model.
From Lemma 1, we easily derive the average number of users per BS in the $k$th tier, which quantifies the cell load of each tier.
\begin{lemma}
The average number of users associated with a BS in the $k$th tier (also the cell load of the $k$th tier) is given as
\begin{align}\label{eq:Cor1-1}
\mathcal{N}_k = 2\pi \lambda^{\mathrm{(u)}} \int_0^{\infty} r \exp \left\{-\pi \textstyle\sum_{j=1}^{K} \lambda_j (\widehat{P}_j \widehat{B}_j)^{2/\alpha_j} r^{2/\widehat{\alpha}_j} \right \} \mathrm{d}r.
\end{align}
If $\{\alpha_j\}=\alpha$, (\ref{eq:Cor1-1}) simplifies to
\begin{align}\label{eq:Cor1-2}
\mathcal{N}_k = \frac{\lambda^{\mathrm{(u)}}} {\lambda_k+\sum_{j=1,j\neq k}^{K} \lambda_j (\widehat{P}_j \widehat{B}_j)^{2/\alpha}}.
\end{align}
\end{lemma}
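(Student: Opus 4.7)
The plan is to reduce Lemma~2 directly to Lemma~1 by a simple accounting argument: the mean number of users served by a typical BS in the $k$th tier equals the mean density of tier-$k$-associated users divided by the BS density $\lambda_k$. Because the user process $\Phi^{(\mathrm{u})}$ is an independent homogeneous PPP with intensity $\lambda^{(\mathrm{u})}$ that is independent of $\{\Phi_j\}$, and because the association rule depends only on the distances from a user to each tier's BSs, the thinning of $\Phi^{(\mathrm{u})}$ induced by ``being associated with tier $k$'' is a stationary point process whose intensity equals $\lambda^{(\mathrm{u})} \mathcal{A}_k$, where $\mathcal{A}_k$ is the per-tier association probability from Lemma~1.

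Concretely, I would first argue stationarity: shifting all processes by an arbitrary vector does not change the joint law of $(\Phi^{(\mathrm{u})},\Phi_1,\ldots,\Phi_K)$, and the association rule is translation-equivariant, so the intensity of tier-$k$-associated users at any point equals that at the origin, which by Lemma~1 is $\lambda^{(\mathrm{u})} \mathcal{A}_k$. Next, I would invoke a mean-ergodic (or equivalently Palm-theoretic) identity for the two independent stationary processes: the expected number of associated users per tier-$k$ BS is
\begin{equation}
\mathcal{N}_k = \frac{\lambda^{(\mathrm{u})}\mathcal{A}_k}{\lambda_k}.
\end{equation}
Substituting the expression for $\mathcal{A}_k$ from (\ref{eq:Lem1}), the factor $\lambda_k$ cancels and one recovers (\ref{eq:Cor1-1}). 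Substituting (\ref{eq:Lem1-0}) in the common-exponent case and simplifying the ratios yields (\ref{eq:Cor1-2}) after trivial algebra.

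The only genuinely delicate step is justifying the ``users per BS $=$ ratio of intensities'' identity, because $\Phi_k$ is itself random and the Voronoi-like cells defined by the biased-association rule have random areas with nontrivial joint distribution. The clean justification is Campbell's theorem / the mass-transport principle: for the jointly stationary marked process consisting of BSs and the users they serve, the mean total mass emanating from a typical BS equals the mean total mass received at a typical user, divided by the relative intensities. Since each user contributes mass one to exactly one BS, this yields $\lambda_k \mathcal{N}_k = \lambda^{(\mathrm{u})} \mathcal{A}_k$, which is the needed identity.

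The remainder is pure bookkeeping; no integral needs to be re-evaluated because Lemma~1 has already done the hard work.
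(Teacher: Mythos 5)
Your proposal is correct and follows essentially the same route as the paper: both reduce the claim to the identity $\mathcal{N}_k = \mathcal{A}_k \lambda^{(\mathrm{u})}/\lambda_k$ and then substitute Lemma~1. The only difference is that you justify this ratio-of-intensities step rigorously via stationarity and a mass-transport argument, whereas the paper asserts it through an informal total-users-over-total-BSs area count; the conclusion is the same.
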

\begin{proof}
Denoting $S$, $N$, $N_{k}^{\mathrm{(b)}}$, and $N_{k}^{\mathrm{(u)}}$ as the area of the entire network, number of users in the entire network, average number of $k$th tier BSs and $k$th tier users, respectively, we obtain $N_{k}^{\mathrm{(u)}}=\mathcal{A}_k N = \mathcal{A}_k \lambda^{\mathrm{(u)}} S$ and $N_{k}^{\mathrm{(b)}}= \lambda_k S$. From the relations, the number of users per BS in the $k$th tier is given by
\begin{align}\label{eq:Cor1-3}
\mathcal{N}_k=\frac{N_{k}^{\mathrm{(u)}}}{N_{k}^{\mathrm{(b)}}}
=\frac{\mathcal{A}_k\lambda^{\mathrm{(u)}}}{\lambda_k}.
\end{align}
Combining (\ref{eq:Cor1-3}) with (\ref{eq:Lem1}) gives the desired result. In case of $\{\alpha_j\}=\alpha$, the desired result follows by using (\ref{eq:Lem1-4}) instead of (\ref{eq:Lem1}).
\end{proof}
Higher BS transmit power or biasing lead more users to connect with each BS of the corresponding tier.
Comparing Lemma 1 with Lemma 2 admits the perhaps surprising observation that as the BS density of the $k$th tier increases, more users are associated with the $k$th tier but the number of users per BS decreases even in that tier. The decrease in the number of users per BS is because in (\ref{eq:Cor1-3}), $A_{k}$ scales like $\frac{\lambda_k}{\lambda_k+c}$ less than $\lambda_k$ for a positive value of $c$ (see (\ref{eq:Lem1-4})). On the other hand, as more users are associated with the $k$th tier, less users are associated with other tiers. It implies that deploying more BSs reduces the cell load of the corresponding tier as well as other tiers, whereas employing higher transmit power or bias factor increases the cell load of the corresponding tier while decreasing that of other tiers.

\subsection{Statistical Distance to Serving Base Station}
We consider a typical user at the origin associated with the $k$th tier. Denote $X_k$ as the distance between the user and its serving BS. Since BSs are deployed as a PPP, $X_k$ is a random variable described by its probability density function (PDF) given in Lemma 3.
\begin{lemma}
The PDF $f_{X_k}(x)$ of the distance $X_k$ between a typical user and its serving BS is
\begin{align}\label{eq:Lem2}
f_{X_k}(x) = \frac{2\pi\lambda_k }{\mathcal{A}_k} x \exp \left\{-\pi\textstyle\sum_{j=1}^{K} \lambda_j (\widehat{P}_j \widehat{B}_j)^{2/\alpha_j} x^{2/\widehat{\alpha}_j} \right \}.
\end{align}
\end{lemma}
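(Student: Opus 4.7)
The plan is to condition on the distance $R_k$ to the nearest tier-$k$ BS, observe that this nearest BS is in fact the serving BS whenever the user is associated with tier $k$, and then apply Bayes' rule together with the null-probability formula for independent PPPs.

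First I would argue that $X_k$ equals $R_k$ on the event that the user is associated with tier $k$. Within tier $k$, every BS has identical transmit power $P_k$ and bias $B_k$, so the maximum biased received power from tier $k$ is achieved at the closest tier-$k$ BS. Hence conditioning on association with tier $k$ is identical to conditioning on the event $\mathcal{E}_k(R_k)$ that $P_k B_k R_k^{-\alpha_k} \ge P_j B_j R_j^{-\alpha_j}$ for every $j\ne k$.

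Next I would write, by Bayes' rule,
\begin{equation*}
f_{X_k}(x) \;=\; \frac{\mathbb{P}\!\left(\mathcal{E}_k(x)\,\big|\,R_k = x\right)\, f_{R_k}(x)}{\mathcal{A}_k},
\end{equation*}
with $\mathcal{A}_k$ given by Lemma 1. Because $\Phi_k$ is a homogeneous PPP with intensity $\lambda_k$, the nearest-point distance has density $f_{R_k}(x) = 2\pi\lambda_k x \exp(-\pi\lambda_k x^2)$, which contributes the $j=k$ term in the sum inside the exponential (recall $\widehat{P}_k\widehat{B}_k=1$ and $\widehat{\alpha}_k=1$). For each $j\ne k$, the condition $P_j B_j R_j^{-\alpha_j} < P_k B_k x^{-\alpha_k}$ becomes $R_j > (\widehat{P}_j\widehat{B}_j)^{1/\alpha_j}\, x^{1/\widehat{\alpha}_j}$, and by the void probability of $\Phi_j$ the probability of this event equals $\exp\{-\pi\lambda_j (\widehat{P}_j\widehat{B}_j)^{2/\alpha_j} x^{2/\widehat{\alpha}_j}\}$.

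I would then invoke independence of $\{\Phi_j\}_{j=1,\ldots,K}$ to factor $\mathbb{P}(\mathcal{E}_k(x)\mid R_k = x)$ into a product over $j\ne k$, multiply the product by $f_{R_k}(x)$, and absorb the $j=k$ factor into the sum in the exponent. This yields exactly the expression in (\ref{eq:Lem2}) after dividing by $\mathcal{A}_k$. The only delicate point, and the one I would be careful about, is ensuring that the tier-$k$ and tier-$j$ factors combine correctly under a single exponential with a common dummy variable $x$; once the substitutions $\widehat{P}_k\widehat{B}_k = 1$ and $\widehat{\alpha}_k = 1$ are used, the merging is immediate and the normalization by $\mathcal{A}_k$ confirms that $f_{X_k}$ integrates to one in view of Lemma 1.
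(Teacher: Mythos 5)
Your proposal is correct and follows essentially the same route as the paper: condition on association with tier $k$, note that the serving BS is then the nearest tier-$k$ BS, and use the independence and void probabilities of the PPPs $\{\Phi_j\}$ to evaluate the association event given $R_k = x$. The only cosmetic difference is that you apply Bayes' rule directly at the density level, whereas the paper first computes the complementary CDF $\mathbb{P}[X_k > x]$ as an integral and then differentiates; both yield the same integrand and the same final expression.
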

\begin{proof}
Since the event of $X_k>x$ is the event of $R_k>x$, given the typical user's association with the $k$th tier, the probability of $X_k>x$ can be given as
\begin{align}\label{eq:CCDF_X}
\mathbb{P}[X_k>x]&=\mathbb{P}[R_k>x \mid n=k]=\frac{\mathbb{P}[R_k>x, n=k]}{\mathbb{P}[n=k]},
\end{align}
where $\mathbb{P}[n=k]=\mathcal{A}_k$ follows from Lemma 1, and the joint probability of $R_k>x$ and $n=k$ is
\begin{align}\label{eq:CCDF_X1}
\mathbb{P}[R_k>x, n=k]&= \mathbb{P}\left[R_k>x, P_{\mathrm{r},k}(R_k)> \max_{j, j\neq k} P_{\mathrm{r},j}\right]\cr
&=\int_{x}^{\infty} \textstyle\prod_{j=1, j\neq k}^K \mathbb{P}\left[P_{\mathrm{r},k}(r) > P_{\mathrm{r},j}\right] f_{R_k}(r)\mathrm{d}r \cr
&\mathop =\limits^{\left( a \right)}
\int_{x}^{\infty} \textstyle\prod_{j=1, j\neq k}^K \mathbb{P}\left[R_j > (\widehat{P}_j\widehat{B}_j)^{1/\alpha_j}r^{1/\widehat{\alpha}_j} \right] f_{R_k}(r)\mathrm{d}r \cr
&\mathop =\limits^{\left( b \right)}
2\pi \lambda_k \int_x^{\infty} r \exp \left\{-\pi \textstyle\sum_{j=1}^{K} \lambda_j (\widehat{P}_j \widehat{B}_j)^{2/\alpha_j} r^{2/\widehat{\alpha}_j} \right\} \mathrm{d}r,
\end{align}
where $(a)$ follows from (\ref{eq:Pr_j}), and $(b)$ is given from (\ref{eq:Lem1-2}) and (\ref{eq:Lem1-3}).
Plugging (\ref{eq:CCDF_X1}) into (\ref{eq:CCDF_X}) gives
\begin{align}\label{eq:CCDF_X2}
\mathbb{P}[X_k>x]=\frac{2\pi \lambda_k}{\mathcal{A}_k} \int_x^{\infty} r \exp \left\{-\pi \textstyle\sum_{j=1}^{K} \lambda_j (\widehat{P}_j \widehat{B}_j)^{2/\alpha_j} r^{2/\widehat{\alpha}_j} \right\} \mathrm{d}r,
\end{align}
The CDF of $X_k$ is $F_{X_k}(x)=1-\mathbb{P}[X_k > x]$ and the PDF is given as
\begin{align*}
f_{X_k}(x)=\frac{\mathrm{d}F_{X_k}(x)}{\mathrm{d}x}= \frac{2\pi\lambda_k }{\mathcal{A}_k} x \exp \left\{-\pi\textstyle\sum_{j=1}^{K} \lambda_j (\widehat{P}_j \widehat{B}_j)^{2/\alpha_j} x^{2/\widehat{\alpha}_j} \right\}.
\end{align*}
\end{proof}
Given a typical user associated with the $k$th tier, the interference at the user at distance $X_k=x$ is the sum of aggregate interference from all BSs, which are outside a disc of radius $(\widehat{P}_j \widehat{B}_j) ^{1/\alpha_j} x^{\widehat{\alpha}_j}$ for all $j=1,\cdots,K$ (see (a) in (\ref{eq:CCDF_X1})).
Note that this gives a large difference in the interference model between a cellular network with open access and previous studies for wireless ad hoc networks and two-tier femtocell networks with closed access, which had no such lower limit of the minimum distance from interfering BSs.

\section{Outage Probability}
Define the outage probability $\mathcal{O}$ as the probability that the instantaneous SINR of a randomly located user is less than a target SINR. Since the typical user is associated with at most one tier, from the law of total probability, the probability is given as
\begin{align}\label{eq:Pout}
\mathcal{O}
= \sum_{k=1}^K \mathcal{O}_{k}\mathcal{A}_k,
\end{align}
where $\mathcal{A}_k$ is the per-tier association probability given in Lemma 1 and $\mathcal{O}_{k}$ is the outage probability of a typical user associated with $k$th tier. For a target SINR $\tau$ and a typical user $\mathtt{SINR}_k(x)$ at a distance $x$ from its associated BS, the outage probability is
\begin{align}\label{eq:PoutDef}
\mathcal{O}_{k}&\triangleq \mathbb{E}_{x}\left[\mathbb{P}\left[\mathtt{SINR}_k(x) \leq \tau \right]\right]
\end{align}
The metric is the outage probability averaged over cell coverage (defined by $f_{X_k}(x)$ in Lemma 2). It represents the average fraction of the cell area that is in {\it outage} at any time. The metric is also exactly the CDF of SINR over the entire network.

The SINR of a typical user at a random distance $x$ from its associated BS in tier $k$ is
\begin{equation}\label{eq:SINR}
\mathtt{SINR}_k(x) = \frac{P_k g_{k,0} x^{-\alpha_k}}{\sum_{j=1}^{K} \sum_{i \in \Phi_j \backslash B_{k0} } P_j h_{ji} |Y_{ji}|^{-\alpha_j}+W/L_0},
\end{equation}
where $g_{k,0}$ is the exponentially distributed channel power with unit mean from the serving BS, $|Y_{ji}|$ is the distance from BS $i \in \Phi_j\backslash B_{k0}$ (except the serving BS $B_{k0}$) to the origin, and $h_{ji}$ is the exponentially distributed channel power with unit mean from the $i$th interfering BS in the $j$th tier.

\subsection{General Case and Main Result}
We now provide our most general result for the network outage probability from which all other results in this section follow.
\begin{theorem}
The outage probability of a typical user associated with the $k$th tier is
\begin{align}\label{eq:Pk}
\mathcal{O}_{k}=1- \frac{2\pi\lambda_k}{\mathcal{A}_k} \int_0^{\infty}  x \exp \left\{- \tfrac{\tau }{\mathtt{SNR}}  -\pi \textstyle\sum_{j=1}^{K} C_{j} x^{2/\widehat{\alpha}_j} \right \} \mathrm{d}x,
\end{align}
where $\mathtt{SNR}=\frac{P_k L_0 x^{-\alpha_k}}{W}$ and
\begin{align*}
C_{j} = \lambda_j \widehat{P}_j^{2/\alpha_j} \left[\widehat{B}_j^{2/\alpha_j}+\mathcal{Z}(\tau, \alpha_j, \widehat{B}_j)\right],
\end{align*}
with $\mathcal{Z}(\tau, \alpha_j, \widehat{B}_j) = \frac{2\tau \widehat{B}_j^{2/\alpha_j-1}}{\alpha_j-2} ~_{2}F_1\left[1, 1\!-\!\tfrac{2}{\alpha_j}; 2\!-\!\tfrac{2}{\alpha_j};  -\tfrac{\tau}{\widehat{B}_j} \right]$,
and $_2F_1[\cdot]$ denotes the Gauss hypergeometric function.
It follows that the outage probability of a randomly chosen user is
\begin{align}\label{eq:P}
\mathcal{O}=1- \sum_{k=1}^{K} 2\pi\lambda_k \int_0^{\infty}  x \exp \left\{- \tfrac{\tau }{\mathtt{SNR}}  -\pi \textstyle\sum_{j=1}^{K} C_{j} x^{2/\widehat{\alpha}_j} \right \} \mathrm{d}x.
\end{align}
\end{theorem}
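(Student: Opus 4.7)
The plan is to compute the complementary event $\mathbb{P}[\mathtt{SINR}_k(x)>\tau]$ conditioned on the typical user's distance $X_k=x$ to its serving BS (whose distribution was already determined in Lemma~3), exploit Rayleigh fading to convert the SINR tail into a Laplace functional of the aggregate interference, evaluate that Laplace functional with the probability generating functional (PGFL) of a homogeneous PPP subject to the \emph{lower} radial cutoff dictated by the biased-association rule, and finally de-condition by integrating against $f_{X_k}$.

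First I would write $\mathcal{O}_k = 1 - \int_0^\infty \mathbb{P}[\mathtt{SINR}_k(x)>\tau]\,f_{X_k}(x)\,\mathrm{d}x$. Since $g_{k,0}\sim\mathrm{Exp}(1)$, standard manipulation of (\ref{eq:SINR}) gives the conditional success probability as $\exp(-\tau/\mathtt{SNR})\,\mathcal{L}_I(\tau x^{\alpha_k}/P_k)$, and $\mathcal{L}_I$ factorises over the $K$ independent tiers as $\prod_j\mathcal{L}_{I_j}$.

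The central step is the per-tier Laplace transform $\mathcal{L}_{I_j}$. The biased-association rule forces every interferer in tier $j$ to lie outside the disc of radius $d_j(x)\triangleq(\widehat P_j\widehat B_j)^{1/\alpha_j}x^{1/\widehat\alpha_j}$, exactly as derived in step $(a)$ of (\ref{eq:CCDF_X1}). Applying the PGFL of a homogeneous PPP on the complement of this disc and averaging over $h_{ji}\sim\mathrm{Exp}(1)$ produces
\[
\mathcal{L}_{I_j}(s)=\exp\Bigl(-2\pi\lambda_j\!\int_{d_j(x)}^{\infty}\!\frac{sP_j r^{-\alpha_j}}{1+sP_j r^{-\alpha_j}}\,r\,\mathrm{d}r\Bigr).
\]
I would then normalise via $u=r/d_j(x)$. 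With $s=\tau x^{\alpha_k}/P_k$ and the identity $\alpha_j/\widehat\alpha_j=\alpha_k$, this collapses the $x$-dependence to a clean prefactor $(\widehat P_j\widehat B_j)^{2/\alpha_j}x^{2/\widehat\alpha_j}$ and reduces the remaining $x$-free integral $\int_1^\infty \tau u/(\widehat B_j u^{\alpha_j}+\tau)\,\mathrm{d}u$ to $\tfrac12\mathcal{Z}(\tau,\alpha_j,\widehat B_j)$ via the substitution $v=u^{\alpha_j}$ and the Euler integral representation of ${}_2F_1$.

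Multiplying $e^{-\tau/\mathtt{SNR}}\prod_j\mathcal{L}_{I_j}$ by $f_{X_k}(x)$ from Lemma~3 fuses two exponentials whose exponents are linear combinations of $x^{2/\widehat\alpha_j}$: the $\widehat B_j^{2/\alpha_j}$ contribution to $C_j$ comes from $f_{X_k}$, while the $\mathcal{Z}(\tau,\alpha_j,\widehat B_j)$ contribution comes from the Laplace transforms. Pulling the factor $2\pi\lambda_k/\mathcal{A}_k$ out of $f_{X_k}$ then yields (\ref{eq:Pk}). The overall expression (\ref{eq:P}) follows at once from (\ref{eq:Pout}): the $\mathcal{A}_k$ factors cancel and $\sum_k\mathcal{A}_k=1$. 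I expect the main obstacle to be the $u$-substitution and the bookkeeping that goes with it: one must verify that the identity $\alpha_j/\widehat\alpha_j=\alpha_k$ makes all $x$-dependence in $sP_j r^{-\alpha_j}$ disappear, so that the integral splits cleanly into an $x$-free hypergeometric part and an $x^{2/\widehat\alpha_j}$ prefactor. Without that cancellation the per-tier exponent could not be written in the compact form $C_j x^{2/\widehat\alpha_j}$, and the subsequent merger with $f_{X_k}$ would be intractable.
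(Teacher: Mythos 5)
Your proposal follows essentially the same route as the paper's proof: condition on $X_k=x$ via Lemma~3, use $g_{k,0}\sim\mathrm{Exp}(1)$ to write the conditional success probability as $e^{-\tau/\mathtt{SNR}}\prod_j\mathcal{L}_{I_j}$, evaluate each $\mathcal{L}_{I_j}$ with the PPP PGFL over the exterior of the exclusion disc of radius $(\widehat{P}_j\widehat{B}_j)^{1/\alpha_j}x^{1/\widehat{\alpha}_j}$, and de-condition; your substitution $u=r/d_j(x)$ is just a cosmetic variant of the paper's $u=(x^{\alpha_k}\widehat{P}_j\tau)^{-2/\alpha_j}y^2$. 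One small bookkeeping slip: the $x$-free integral $\int_1^\infty \tau u/(\widehat{B}_ju^{\alpha_j}+\tau)\,\mathrm{d}u$ equals $\tfrac12\widehat{B}_j^{-2/\alpha_j}\mathcal{Z}(\tau,\alpha_j,\widehat{B}_j)$ rather than $\tfrac12\mathcal{Z}$, and it is exactly this extra $\widehat{B}_j^{-2/\alpha_j}$ that cancels the $\widehat{B}_j^{2/\alpha_j}$ in the prefactor $d_j(x)^2$ so that the interference term in $C_j$ carries $\widehat{P}_j^{2/\alpha_j}$ only, consistent with your (correct) final attribution of the $\widehat{B}_j^{2/\alpha_j}$ piece of $C_j$ to $f_{X_k}$.
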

\begin{proof}See Appendix~\ref{sec:PFThm1}.
\end{proof}
Although Theorem 1 does not give a closed-from expression, the integral is fairly easy to compute. For some plausible cases, we obtain simplified results in the following section.

\subsection{Special Case: Interference-Limited Network}
Since the BS density is typically quite high in heterogeneous cellular networks, the interference power easily dominates thermal noise. Thermal noise can often therefore be neglected, i.e. $W=0$, as we do in the rest of this section.
\subsubsection{No Noise, Equal Path Loss Exponents $\{\alpha_j\}=\alpha$}
\begin{corollary}
When $\{\alpha_j\}=\alpha$, the outage probability of the $k$th tier and overall network is respectively,
\begin{align}\label{eq:Pk1}
\mathcal{O}_{k}(\tau, \{\lambda_j\}, \alpha, \{\widehat{B}_j\})=1-
\frac{\sum_{j=1}^{K}\lambda_j (\widehat{P}_j \widehat{B}_j)^{2/\alpha}} {\sum_{j=1}^{K} \lambda_j \widehat{P}_j^{2/\alpha} \left[\widehat{B}_j^{2/\alpha}+\mathcal{Z}(\tau, \alpha, \widehat{B}_j) \right]}
\end{align}
and
\begin{align}\label{eq:P1}
\mathcal{O}(\tau, \{\lambda_j\}, \alpha, \{\widehat{B}_j\})=1-
\sum_{k=1}^{K} \left\{ \sum_{j=1}^{K} \frac{\lambda_j}{ \lambda_k} \widehat{P}_j^{2/\alpha} \left[\widehat{B}_j^{2/\alpha}+\mathcal{Z}(\tau, \alpha, \widehat{B}_j)\right] \right\}^{-1}.
\end{align}
\end{corollary}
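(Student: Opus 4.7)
The plan is to specialize Theorem 1 under the two simplifying assumptions of the corollary (interference-limited operation and a common path-loss exponent $\{\alpha_j\}=\alpha$) and reduce the resulting integral to an elementary Gaussian form. Setting $W=0$ sends $\mathtt{SNR}=P_k L_0 x^{-\alpha_k}/W$ to infinity, so the term $\tau/\mathtt{SNR}$ inside the exponent of (\ref{eq:Pk}) drops out. At the same time, $\widehat{\alpha}_j = \alpha_j/\alpha_k = 1$ for every $j$, so each factor $x^{2/\widehat{\alpha}_j}$ collapses to $x^2$ and the entire sum in the exponent becomes $\pi x^2 \sum_{j=1}^K C_j$.

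With these reductions, the integral in (\ref{eq:Pk}) becomes $\int_0^\infty x \exp\{-\pi x^2 \sum_{j=1}^K C_j\}\,\mathrm{d}x$, which equals $\bigl(2\pi \sum_{j=1}^K C_j\bigr)^{-1}$ by the standard identity $\int_0^\infty x e^{-a x^2}\,\mathrm{d}x = 1/(2a)$. Substituting back and cancelling the prefactor of $2\pi$ leaves $\mathcal{O}_k = 1 - \lambda_k / (\mathcal{A}_k \sum_{j=1}^K C_j)$. To cast this in the stated form (\ref{eq:Pk1}), I would invoke the closed-form per-tier association probability (\ref{eq:Lem1-0}) from Lemma 1, which rearranges to $\mathcal{A}_k = \lambda_k (P_k B_k)^{2/\alpha} / \sum_{j=1}^K \lambda_j (P_j B_j)^{2/\alpha}$; dividing numerator and denominator by $(P_k B_k)^{2/\alpha}$ yields $\lambda_k / \mathcal{A}_k = \sum_{j=1}^K \lambda_j (\widehat{P}_j \widehat{B}_j)^{2/\alpha}$. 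Combining this with $C_j = \lambda_j \widehat{P}_j^{2/\alpha}[\widehat{B}_j^{2/\alpha} + \mathcal{Z}(\tau,\alpha,\widehat{B}_j)]$ produces (\ref{eq:Pk1}) directly.

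For the overall outage (\ref{eq:P1}), the same two reductions applied term by term to (\ref{eq:P}) yield $\mathcal{O} = 1 - \sum_{k=1}^K \lambda_k / \sum_{j=1}^K C_j$, and dividing numerator and denominator of each summand by $\lambda_k$ gives the required ratio form. One may alternatively derive it by plugging the per-tier result into the law of total probability (\ref{eq:Pout}) and simplifying the product $\mathcal{A}_k \cdot (\lambda_k/\mathcal{A}_k)$. The derivation is essentially algebraic, so there is no real analytic obstacle; the one point demanding care is that $\widehat{P}_j$, $\widehat{B}_j$, and therefore $C_j$, are implicitly defined relative to the serving tier $k$, so they vary with the outer summation index in (\ref{eq:P1}) even though the notation hides this dependence.
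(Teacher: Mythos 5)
Your proposal is correct and follows essentially the same route as the paper: set $\widehat{\alpha}_j=1$ and drop the noise term, evaluate the resulting Gaussian-type integral (the paper does this via the substitution $x^2=v$, equivalent to your identity $\int_0^\infty x e^{-ax^2}\,\mathrm{d}x = 1/(2a)$), and then substitute the closed form of $\lambda_k/\mathcal{A}_k$ from Lemma 1. Your closing remark that $\widehat{P}_j$, $\widehat{B}_j$, and hence $C_j$ depend on the outer index $k$ in (\ref{eq:P1}) is a correct and worthwhile point of care that the paper leaves implicit.
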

\begin{proof}
If $\{\alpha_j\}=\alpha$, then $\widehat{\alpha}_j=1$. From (\ref{eq:Pk}) we obtain
\begin{align*}
\mathcal{O}_{k}(\tau, \{\lambda_j\}, \alpha, \{\widehat{B}_j\})=1 - \frac{2\pi \lambda_k}{\mathcal{A}_k}   \int_0^{\infty}  x \exp\left\{  -\pi  x^2 \textstyle\sum_{j=1}^{K} \lambda_j \widehat{P}_j^{2/\alpha} \left[\widehat{B}_j^{2/\alpha}+\mathcal{Z}(\tau, \alpha, \widehat{B}_j)\right] \right\}   \mathrm{d}x.
\end{align*}
Employing a change of  variables $x^2=v$ and $\int_0^\infty e^{-Av}\mathrm{d}v=\frac{1}{A}$, and plugging (\ref{eq:Lem1-4}) give the result in (\ref{eq:Pk1}). Similarly (\ref{eq:P}) leads to the result in (\ref{eq:P1}).
\end{proof}
This expression is simple and practically closed-form, requiring only the computation (or lookup) of a $\mathcal{Z}(a,b,c)$ value.
When $\{\widehat{B}_j\}=1$, i.e. no biasing, the probabilities are further simplified.
\begin{corollary}
When $\{\alpha_j\}=\alpha$ and $\{\widehat{B}_j\}=1$, i.e. unbiased association, the outage probability of $k$th tier and the network outage probability are given as
\begin{align}\label{eq:Pk11}
\mathcal{O}(\tau, \{\lambda_j\}, \alpha, 1) = \mathcal{O}_{k}(\tau, \{\lambda_j\}, \alpha, 1)=1-\frac{1}{1+\mathcal{Z}(\tau, \alpha,1)}.
\end{align}
\end{corollary}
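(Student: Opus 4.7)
The plan is to derive both equalities by specializing the expressions in Corollary 1 to the case $\{\widehat{B}_j\}=1$, which by definition means $B_j=B_k$ for every $j$, and exploiting the simple algebraic structure that results.

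First I would handle the per-tier outage probability. Setting $\widehat{B}_j=1$ in the formula for $\mathcal{O}_k$ in Corollary 1, the numerator becomes $\sum_{j=1}^{K}\lambda_j \widehat{P}_j^{2/\alpha}$ and each bracketed term in the denominator becomes $1+\mathcal{Z}(\tau,\alpha,1)$. Since this factor is independent of $j$, it pulls out of the sum, leaving the same $\sum_{j=1}^{K}\lambda_j \widehat{P}_j^{2/\alpha}$ in the denominator. The two sums cancel and $\mathcal{O}_k=1-[1+\mathcal{Z}(\tau,\alpha,1)]^{-1}$, as claimed. Notably, this already shows that $\mathcal{O}_k$ is identical for every tier $k$ and depends neither on $\{\lambda_j\}$ nor on $\{P_j\}$.

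Next I would treat the overall network outage $\mathcal{O}$. Substituting $\widehat{B}_j=1$ into the Corollary 1 expression and pulling the common factor $1+\mathcal{Z}(\tau,\alpha,1)$ out of the inner sum gives
\begin{align*}
\mathcal{O}=1-\frac{1}{1+\mathcal{Z}(\tau,\alpha,1)}\sum_{k=1}^{K}\frac{\lambda_k}{\sum_{j=1}^{K}\lambda_j\widehat{P}_j^{2/\alpha}}.
\end{align*}
Rewriting $\widehat{P}_j=P_j/P_k$ yields $\sum_{j=1}^{K}\lambda_j\widehat{P}_j^{2/\alpha}=P_k^{-2/\alpha}\sum_{j=1}^{K}\lambda_j P_j^{2/\alpha}$, so the outer summand becomes $\lambda_k P_k^{2/\alpha}/\sum_{j=1}^{K}\lambda_j P_j^{2/\alpha}$. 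Summing over $k$ telescopes to $1$, so the remaining factor is exactly $[1+\mathcal{Z}(\tau,\alpha,1)]^{-1}$, giving the second equality.

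There is no real obstacle here; the only step that requires a moment's care is the rewriting of $\widehat{P}_j$ back in terms of absolute powers so that the outer sum over tiers normalises to unity. As a sanity check, the identity $\mathcal{O}=\mathcal{O}_k$ is also an immediate consequence of the total-probability decomposition $\mathcal{O}=\sum_k\mathcal{O}_k\mathcal{A}_k$ together with $\sum_k\mathcal{A}_k=1$, once one has observed from the per-tier computation above that $\mathcal{O}_k$ is the same constant for every $k$.
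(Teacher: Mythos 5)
Your proposal is correct and follows essentially the same route as the paper: specialize Corollary 1 to $\widehat{B}_j=1$, observe that the factor $1+\mathcal{Z}(\tau,\alpha,1)$ pulls out so that $\mathcal{O}_k$ becomes independent of $k$, $\{\lambda_j\}$, and $\{P_j\}$. The paper obtains the second equality directly from $\mathcal{O}=\sum_k\mathcal{O}_k\mathcal{A}_k$ with $\sum_k\mathcal{A}_k=1$ (the argument you give as a sanity check), whereas you additionally verify it by explicit substitution into the network formula of Corollary 1; both computations are valid and equivalent.
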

\begin{proof}
Plugging $\widehat{B}_j=1$ in (\ref{eq:Pk1}), we obtain
\begin{align*}
\mathcal{O}_{k}(\tau, \{\lambda_j\}, \alpha, 1)=1-\frac{1}{1+\mathcal{Z}(\tau, \alpha,1)},
\end{align*}
which is independent on the index $k$. From (\ref{eq:Pout}) and $\sum_{k=1}^K \mathcal{A}_k=1$, we obtain
$\mathcal{O} = \mathcal{O}_{k}\sum_{k=1}^K \mathcal{A}_k=\mathcal{O}_{k}$. This gives the result in (\ref{eq:Pk11}).
\end{proof}
This is much simpler than Corollary 1, in particular the outage probability is now independent of the BS transmit power $P_j$ and BS density $\lambda_j$ for unbiased association.  It is also independent of the number of tiers $K$. This means that an open access unbiased HCN can randomly add new infrastructure, and the outage probability is not increased. This perhaps counter-intuitive result makes sense when one realizes that a mobile user always associates with the strongest BS, and so the SINR statistics do not change as more BSs are added. Interference management appears to be required to enhance downlink quality, since heterogeneous deployments do not change the outage probability. This ``invariance'' property is also observed in a interference-limited HCN with a different cell association \cite{DhiAndJ11}, where the user connects to the BS offering the maximum \emph{instantaneous} SINR, whereas here the user connects to the BS offering the maximum \emph{average} received power.

\subsubsection{No Noise, Equal Path Loss Exponents $\{\alpha_j\}=4$}
When $\{\alpha_j\}=4$, the Gauss hypergeometric function in Corollary 1 and 2 collapses to a simple arctangent function denoted by $\mathrm{arctan}()$. This provides a yet simpler expression for the outage probability.
\begin{corollary}
If $\{\alpha_j\}=4$, the outage probability of $k$th tier and overall network is respectively,
\begin{align}\label{eq:Pk2}
\mathcal{O}_{k}(\tau, \{\lambda_j\}, 4, \{\widehat{B}_j\})= 1-\frac{{\textstyle\sum}_{j=1}^{K}\lambda_j \sqrt{\widehat{P}_j \widehat{B}_j}} {{\textstyle\sum}_{j=1}^{K}\lambda_j \sqrt{\widehat{P}_j \widehat{B}_j} (1+\sqrt{\tau/\widehat{B}_j} \arctan(\sqrt{\tau/\widehat{B}_j}))}
\end{align}
and
\begin{align}\label{eq:P2}
\mathcal{O}(\tau, \{\lambda_j\}, 4, \{\widehat{B}_j\})=1-
\sum_{k=1}^{K} \left\{\sum_{j=1}^{K} \frac{\lambda_j}{ \lambda_k} \sqrt{\widehat{P}_j \widehat{B}_j} \left(1+\sqrt{\tau/\widehat{B}_j} \arctan\left(\sqrt{\tau/\widehat{B}_j}\right) \right) \right\}^{-1}.
\end{align}
\end{corollary}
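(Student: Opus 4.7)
The plan is to specialize Corollary 1 rather than re-derive from Theorem 1, since the only $\alpha$-dependent piece of (\ref{eq:Pk1}) and (\ref{eq:P1}) is the auxiliary function $\mathcal{Z}(\tau,\alpha,\widehat{B}_j)$. First I would substitute $\alpha=4$ directly into the definition of $\mathcal{Z}$. Since $2/\alpha=1/2$ and $\alpha-2=2$, the prefactor $\tfrac{2\tau \widehat{B}_j^{2/\alpha-1}}{\alpha-2}$ collapses to $\tau\,\widehat{B}_j^{-1/2}$, and the hypergeometric argument becomes ${}_2F_1\!\left[1,\tfrac{1}{2};\tfrac{3}{2};-\tfrac{\tau}{\widehat{B}_j}\right]$.

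Next I would invoke the classical identity
\[
{}_2F_1\!\left[1,\tfrac{1}{2};\tfrac{3}{2};-z\right] \;=\; \frac{\arctan(\sqrt{z})}{\sqrt{z}}, \qquad z\ge 0,
\]
which follows immediately from term-by-term comparison with the Maclaurin series for $\arctan$ (or from the Euler integral representation of ${}_2F_1$). Setting $z=\tau/\widehat{B}_j$ and multiplying by the prefactor gives $\mathcal{Z}(\tau,4,\widehat{B}_j)=\sqrt{\tau}\,\arctan\!\bigl(\sqrt{\tau/\widehat{B}_j}\bigr)$. Plugging this and the identity $\widehat{P}_j^{2/\alpha}=\sqrt{\widehat{P}_j}$, $\widehat{B}_j^{2/\alpha}=\sqrt{\widehat{B}_j}$ into (\ref{eq:Pk1}) turns each summand in the denominator into $\lambda_j\sqrt{\widehat{P}_j}\bigl[\sqrt{\widehat{B}_j}+\sqrt{\tau}\,\arctan(\sqrt{\tau/\widehat{B}_j})\bigr]$.

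Finally, I would factor a $\sqrt{\widehat{B}_j}$ out of this bracket to rewrite it as $\sqrt{\widehat{P}_j\widehat{B}_j}\bigl(1+\sqrt{\tau/\widehat{B}_j}\,\arctan(\sqrt{\tau/\widehat{B}_j})\bigr)$, which matches (\ref{eq:Pk2}) exactly; the same substitution applied to (\ref{eq:P1}) yields (\ref{eq:P2}). Honestly, there is no serious obstacle here: the derivation is a one-line reduction of a Gauss hypergeometric to an arctangent, followed by careful bookkeeping of the $\widehat{B}_j^{1/2}$ factors. The only point requiring real care is keeping the exponents of $\widehat{B}_j$ straight when factoring, since $\widehat{B}_j$ appears simultaneously inside the $\arctan$, as a power outside, and in the prefactor $\widehat{B}_j^{2/\alpha-1}$.
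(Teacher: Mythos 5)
Your proof is correct and takes essentially the same route as the paper: specialize Corollary 1 at $\alpha=4$, reduce $\mathcal{Z}(\tau,4,\widehat{B}_j)$ to $\sqrt{\tau}\arctan\bigl(\sqrt{\tau/\widehat{B}_j}\bigr)$, and factor $\sqrt{\widehat{B}_j}$ out of each denominator term. The only cosmetic difference is that the paper obtains this value of $\mathcal{Z}$ from its integral representation $\sqrt{\tau}\int_{\sqrt{\widehat{B}_j/\tau}}^{\infty}(1+u^{2})^{-1}\,\mathrm{d}u$ in (\ref{eq:Psucc4}) rather than via the ${}_2F_1\!\left[1,\tfrac{1}{2};\tfrac{3}{2};-z\right]=\arctan(\sqrt{z})/\sqrt{z}$ identity you invoke; both evaluations agree.
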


\begin{proof}
When $\alpha=4$, from (\ref{eq:Psucc4}) \\$\mathcal{Z}(\tau, 4, \widehat{B}_j)=\sqrt{\tau}
\int_{\sqrt{\widehat{B}_j/\tau}}^{\infty} \frac{1}{1+u^{2}}\mathrm{d}u  = \sqrt{\tau}\arctan(\sqrt{\tau/\widehat{B}_j})$. Combining with (\ref{eq:Pk1}) and (\ref{eq:P1}) gives the desired results.
\end{proof}
When $\{\widehat{B}_j\}=1$, i.e. no biasing, the outage probability is simplified to a single trigonometric function and a single variable.
\begin{corollary}
When $\{\alpha_j\}=4$ and $\{\widehat{B}_j\}=1$, i.e. unbiased association, the outage probability of $k$th tier and the network outage probability are given as
\begin{align}\label{eq:Pk12}
\mathcal{O}(\tau, \{\lambda_j\}, 4, 1) = \mathcal{O}_{k}(\tau, \{\lambda_j\}, 4, 1)= 1-\frac{1}{1+\sqrt{\tau}\arctan(\sqrt{\tau})}.
\end{align}
\end{corollary}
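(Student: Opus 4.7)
The plan is to recognize this as a direct specialization of either Corollary 3 (setting $\widehat{B}_j=1$) or, equivalently, of Corollary 2 (setting $\alpha=4$). No new evaluation of the integral in (\ref{eq:Pk}) is needed: the analytic work was already done in Theorem 1 and in the two preceding corollaries, so this is essentially a bookkeeping exercise.

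First I would substitute $\widehat{B}_j=1$ into the per-tier expression (\ref{eq:Pk2}). In both numerator and denominator the factor $\sqrt{\widehat{P}_j\widehat{B}_j}$ collapses to $\sqrt{\widehat{P}_j}$, and the arctangent argument $\sqrt{\tau/\widehat{B}_j}$ collapses to $\sqrt{\tau}$. Since the factor $1+\sqrt{\tau}\arctan(\sqrt{\tau})$ no longer depends on the summation index $j$, it pulls out of the denominator, and the remaining sum $\sum_{j=1}^{K}\lambda_j\sqrt{\widehat{P}_j}$ cancels with the identical numerator sum. What survives is precisely $1-\tfrac{1}{1+\sqrt{\tau}\arctan(\sqrt{\tau})}$, manifestly independent of the tier index $k$, the BS densities $\{\lambda_j\}$, the per-tier powers $\{P_j\}$, and the number of tiers $K$. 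As a consistency check, one can equally reach the result by starting from Corollary 2 and evaluating $\mathcal{Z}(\tau,4,1)=\sqrt{\tau}\arctan(\sqrt{\tau})$ with the same change of variable used in the proof of Corollary 3; the two paths must agree.

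Finally, to obtain the network-wide outage probability I would invoke the total probability formula (\ref{eq:Pout}). Because $\mathcal{O}_k$ is now constant in $k$, it can be factored out of the sum over tiers, and $\sum_{k=1}^{K}\mathcal{A}_k=1$ immediately gives $\mathcal{O}=\mathcal{O}_k$, yielding the claimed joint equality. There is no genuine obstacle in this argument; the only thing to be careful about is tracking which factors in (\ref{eq:Pk2}) depend on $j$ after the substitution $\widehat{B}_j=1$, so that the cancellation between numerator and denominator is performed correctly rather than mistakenly leaving a $\sqrt{\widehat{P}_j}$-dependence behind.
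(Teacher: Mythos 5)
Your proof is correct and essentially matches the paper's: the paper specializes Corollary~2 by evaluating $\mathcal{Z}(\tau,4,1)=\sqrt{\tau}\arctan(\sqrt{\tau})$, which is exactly the consistency-check route you mention, while your primary route (setting $\widehat{B}_j=1$ in (\ref{eq:Pk2}) and cancelling the now $j$-independent factor $1+\sqrt{\tau}\arctan(\sqrt{\tau})$ against the numerator sum) is an equally valid one-line specialization of Corollary~3. The concluding step -- factoring the $k$-independent $\mathcal{O}_k$ out of (\ref{eq:Pout}) and using $\sum_{k=1}^{K}\mathcal{A}_k=1$ -- is the same in both arguments.
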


\begin{proof}
When $\{\alpha_j\}=4$ and $\{\widehat{B}_j\}=1$, from (\ref{eq:Psucc4}), $\mathcal{Z}(\tau, 4, 1)=\sqrt{\tau}
\int_{1/\sqrt{\tau}}^{\infty} \frac{1}{1+u^{2}}\mathrm{d}u  = \sqrt{\tau}\arctan(\sqrt{\tau})$. Combining with (\ref{eq:Pk11}) gives the desired results.
\end{proof}
As the simplified expression of Corollary 2, this corollary also shows that the outage probability of each tier is the same for all tiers, and even it is the same as the outage probability of overall network. This implies that adding small pico and femto BSs to the macrocell network does not change the SINR distribution of each tier, because the increase in interference power is counter-balanced by the increase in signal power. We thus expect that both network sum throughput and per-tier sum throughput linearly increases with the number of BSs without any interference management techniques, since the SINR of each tier cell remains same. Furthermore, for a single-tier $K$=1, the expression of (\ref{eq:Pk12}) is the same as the result of
\cite[eq. (25)]{AndBac10}. Our work thus extends the single-tier approach \cite{AndBac10} to a general multi-tier in the special case of Rayleigh fading.

\section{Spectral Efficiency}
We derive the average ergodic rate and the minimum average user throughput to measure spectral efficiency performance of the network. The average ergodic rate is obtained with similar tools as the outage probability was in Section III. The minimum average user throughput is derived from the average ergodic rate. Both metrics are computed in units of nats/sec/Hz to simplify the expressions and analysis, with 1 nat/s $=$ 1.443 bps.
\subsection{Average Ergodic Rate}
In this section, we derive the average ergodic rate of a typical randomly located user in the $K$-tier cellular network. Using the same approach as in (\ref{eq:Pout}), the average ergodic rate is given as
\begin{align}\label{eq:Rate}
\mathcal{R}
= \sum_{k=1}^K \mathcal{R}_{k}\mathcal{A}_{k},
\end{align}
where $\mathcal{A}_{k}$ is the probability that a typical user is connected to the $k$th tier, which is given in Lemma 1. We denote $\mathcal{R}_{k}$ as the average ergodic rate of a typical user associated with $k$th tier. To compute $\mathcal{R}_{k}$, we first consider the ergodic rate of a user at a distance $x$ from its serving $k$th tier BS. The link rate then is averaged over the distance $x$ (i.e. over the $k$th tier).The average ergodic rate of the $k$th tier is thus defined as
\begin{align}\label{eq:RateDef}
\mathcal{R}_{k} &\triangleq \mathbb{E}_x\left[ \mathbb{E}_{\mathtt{SINR}_k}\left[\ln(1+\mathtt{SINR}_k(x))\right] \right]
\end{align}
The metric means the average data rate of a randomly chosen user within the $k$th tier with one active user per cell. It also represents the average cell throughput of the $k$th tier with orthogonal multiple access and round-robin scheduling.
We first derive the most general results of the average ergodic rate, considering thermal noise as well as per-tier BS density $\lambda_j$, bias factor $B_j$, and path loss exponent $\alpha_j$.
\begin{theorem}
The average ergodic rate of a typical user associated with $k$th tier is
\begin{align}\label{eq:Rk}
\mathcal{R}_{k}=\frac{2\pi\lambda_k}{\mathcal{A}_k}\int_{x>0} \int_{t>0} \exp \left\{- \tfrac{e^t -1}{\mathtt{SNR}}  -\pi \textstyle\sum_{j=1}^{K} x^{2/\widehat{\alpha}_j} C_{j}(t) \right\} x \mathrm{d}t  \mathrm{d}x,
\end{align}
where
\begin{align*}
C_{j}(t) = \lambda_j \widehat{P}_j^{2/\alpha_j} (\widehat{B}_j^{2/\alpha_j}+\mathcal{Z}(e^t -1, \alpha_j, \widehat{B}_j)).
\end{align*}
Furthermore, the average ergodic rate of overall network is
\begin{align}\label{eq:R}
\mathcal{R}=\sum_{k=1}^{K} 2\pi\lambda_k\int_{x>0} \int_{t>0} \exp \left\{- \tfrac{e^t -1}{\mathtt{SNR}}  -\pi \textstyle\sum_{j=1}^{K} x^{2/\widehat{\alpha}_j} C_{j}(t) \right\} \mathrm{d}t x \mathrm{d}x,
\end{align}
\end{theorem}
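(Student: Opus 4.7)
The plan is to derive $\mathcal{R}_k$ from the definition (\ref{eq:RateDef}) by peeling the two expectations one at a time, then combine via (\ref{eq:Rate}) to get $\mathcal{R}$. The crucial observation is that the argument used to prove Theorem 1 actually yields a \emph{conditional} success probability $\mathbb{P}[\mathtt{SINR}_k(x) > \tau \mid X_k = x]$ in closed exponential form, and that this is precisely the object one needs for a rate computation.

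First I would rewrite the inner expectation using the standard identity for a nonnegative random variable,
\begin{equation*}
\mathbb{E}[\ln(1+Y)] = \int_0^{\infty} \mathbb{P}[Y > t]\, \mathrm{d}t = \int_0^{\infty} \mathbb{P}[Y > e^s - 1]\,e^s\, \mathrm{d}s,
\end{equation*}
which I would apply with $Y = \mathtt{SINR}_k(x)$ (conditioned on $X_k=x$), using the change of variable $t = e^s - 1$ so that the integration variable matches the $t$ appearing in the theorem statement. (A cleaner form keeps $\mathbb{P}[Y>t] = \mathbb{P}[\ln(1+Y)>\ln(1+t)]$; the substitution just relabels the variable.) I would then inspect the proof of Theorem 1 to extract the conditional success probability before the outer average over $x$ is taken, namely
\begin{equation*}
\mathbb{P}[\mathtt{SINR}_k(x) > \tau \mid X_k = x] = \exp\!\left\{-\tfrac{\tau}{\mathtt{SNR}} - \pi \textstyle\sum_{j=1}^{K} \lambda_j \widehat{P}_j^{2/\alpha_j}\mathcal{Z}(\tau, \alpha_j, \widehat{B}_j)\, x^{2/\widehat{\alpha}_j}\right\},
\end{equation*}
which I would verify is consistent with the formula for $\mathcal{O}_k$ in (\ref{eq:Pk}) once one multiplies by $f_{X_k}(x)$ from Lemma 3 and integrates in $x$ (the exponent $\widehat{B}_j^{2/\alpha_j}$ comes from $f_{X_k}$, while $\mathcal{Z}(\tau,\alpha_j,\widehat{B}_j)$ comes from the interference Laplace transform, together forming $C_j$).

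Next I would take the outer expectation over $x$ using $f_{X_k}$ from Lemma 3, writing
\begin{align*}
\mathcal{R}_k &= \int_0^{\infty}\!\! \int_0^{\infty} \mathbb{P}\!\left[\mathtt{SINR}_k(x) > e^t{-}1 \mid X_k{=}x \right] \mathrm{d}t\; f_{X_k}(x)\,\mathrm{d}x \\
&= \frac{2\pi\lambda_k}{\mathcal{A}_k}\int_{x>0}\!\!\int_{t>0} x \exp\!\left\{-\tfrac{e^t-1}{\mathtt{SNR}} - \pi \textstyle\sum_{j=1}^{K}\left[\lambda_j(\widehat{P}_j\widehat{B}_j)^{2/\alpha_j} + \lambda_j \widehat{P}_j^{2/\alpha_j}\mathcal{Z}(e^t{-}1,\alpha_j,\widehat{B}_j)\right] x^{2/\widehat{\alpha}_j}\right\} \mathrm{d}t\,\mathrm{d}x,
\end{align*}
and the bracketed sum is exactly $C_j(t)$, yielding (\ref{eq:Rk}). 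The interchange of $\mathrm{d}t$ and $\mathrm{d}x$ (which I would justify by Tonelli, since the integrand is nonnegative) is what allows the inner probability and $f_{X_k}$ to be merged into a single exponential kernel. Finally, (\ref{eq:R}) follows immediately by substituting (\ref{eq:Rk}) into the total-probability decomposition $\mathcal{R} = \sum_k \mathcal{R}_k \mathcal{A}_k$, noting that the $\mathcal{A}_k$ factor cancels the $1/\mathcal{A}_k$ in $\mathcal{R}_k$.

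The main obstacle is really bookkeeping rather than analysis: one must correctly identify which part of the Theorem 1 integrand is the conditional success probability and which part is the serving-distance PDF, so that when $\tau$ is replaced by $e^t-1$ only the $\mathcal{Z}$-term (not the $\widehat{B}_j^{2/\alpha_j}$-term from $f_{X_k}$) picks up the $t$-dependence. Once the split is correctly identified, the remaining Tonelli swap and the change of variable are routine.
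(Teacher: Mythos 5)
Your proposal follows essentially the same route as the paper's own proof: apply the CCDF integral identity to the conditional ergodic rate, recognize that the conditional success probability $\mathbb{P}[\mathtt{SINR}_k(x)>\tau\mid X_k=x]$ with $\tau=e^t-1$ is already available from the proof of Theorem 1 (via the per-tier interference Laplace transforms), average over $f_{X_k}$ from Lemma 3, and combine through $\mathcal{R}=\sum_k\mathcal{R}_k\mathcal{A}_k$. The split of the exponent into the $\widehat{B}_j^{2/\alpha_j}$ part (from the serving-distance PDF) and the $\mathcal{Z}$ part (from the interference) is exactly the bookkeeping the paper performs.

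One slip worth fixing: your first displayed ``identity'' is false as written. For nonnegative $Y$ one has $\int_0^{\infty}\mathbb{P}[Y>t]\,\mathrm{d}t=\mathbb{E}[Y]$, not $\mathbb{E}[\ln(1+Y)]$, and the change of variable $t=e^s-1$ then carries a Jacobian $e^s$ that would compute $\mathbb{E}[\mathtt{SINR}]$ rather than the ergodic rate. The correct statement is
\begin{equation*}
\mathbb{E}[\ln(1+Y)]=\int_0^{\infty}\mathbb{P}[\ln(1+Y)>t]\,\mathrm{d}t=\int_0^{\infty}\mathbb{P}[Y>e^t-1]\,\mathrm{d}t ,
\end{equation*}
with no $e^t$ factor. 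Your subsequent double-integral expression for $\mathcal{R}_k$ silently uses this correct form (no Jacobian appears), so the final result and the rest of the argument stand; just replace the mis-stated identity with the one above.
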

\begin{proof}See Appendix~\ref{sec:PFThm1}.
\end{proof}
Although not closed-form, this expression is efficiently computed numerically as opposed to the usual Monte Carlo methods that rely on repeated random sampling to compute their results.

We now consider the special case, where ignoring thermal noise ($W=0$), and applying unbiased association ($\{\widehat{B}_j\}=1$) and equal path loss exponents ($\{\alpha_j\}=\alpha$) for all tiers.
\begin{corollary}
When $\{\alpha_j\}=\alpha$ and $\{\widehat{B}_j\}=1$, the average ergodic rate of overall network and the average ergodic rate of a typical user associated with $k$th tier are
\begin{align}\label{eq:Rk11}
\mathcal{R}(\{\lambda_j\}, \alpha, 1) = \mathcal{R}_{k}(\{\lambda_j\}, \alpha, 1)=\int _{t=0}^\infty \frac{1}{1+\mathcal{Z}(e^t-1, \alpha,1)}\mathrm{d}t.
\end{align}
If $\{\alpha_j\}=4$, the average rates are further simplified to
\begin{align}\label{eq:Rk12}
\mathcal{R}(\{\lambda_j\}, 4, 1) = \mathcal{R}_{k}(\{\lambda_j\}, 4, 1)= \int_{t=0}^\infty \frac{1}{1+\sqrt{e^t-1}\arctan(\sqrt{e^t-1})}\mathrm{d}t.
\end{align}
\end{corollary}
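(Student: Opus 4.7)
The plan is to specialize the general Theorem 2 expression for $\mathcal{R}_k$ under the three simplifying hypotheses $W=0$, $\{\alpha_j\}=\alpha$, and $\{\widehat{B}_j\}=1$, then show that the result does not depend on $k$, and finally invoke total probability to conclude that $\mathcal{R}=\mathcal{R}_k$. The $\alpha=4$ form then follows by substituting the closed form of $\mathcal{Z}(\tau,4,1)$ used in Corollary 4.

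First I would note that $\{\alpha_j\}=\alpha$ forces $\widehat{\alpha}_j=1$ for all $j$, so every factor $x^{2/\widehat{\alpha}_j}$ in the exponent of Theorem 2 collapses to $x^{2}$. Since $\widehat{B}_j=1$ and the noise contribution $(e^t-1)/\mathtt{SNR}$ vanishes as $W\to 0$, one obtains
\begin{align*}
\textstyle\sum_{j=1}^{K} x^{2/\widehat{\alpha}_j} C_{j}(t) = x^{2}\bigl[1+\mathcal{Z}(e^{t}-1,\alpha,1)\bigr]\textstyle\sum_{j=1}^{K}\lambda_j \widehat{P}_j^{2/\alpha}.
\end{align*}
The inner $x$-integral is then a standard Gaussian-type integral, evaluating via the change of variable $v=x^{2}$ and $\int_0^\infty e^{-Av}\mathrm{d}v = 1/A$ to give
\begin{align*}
\int_0^{\infty} x\exp\!\Bigl\{-\pi x^{2}\bigl[1+\mathcal{Z}(e^t-1,\alpha,1)\bigr]\textstyle\sum_{j}\lambda_j\widehat{P}_j^{2/\alpha}\Bigr\}\,\mathrm{d}x = \frac{1}{2\pi\bigl[1+\mathcal{Z}(e^t-1,\alpha,1)\bigr]\sum_{j}\lambda_j\widehat{P}_j^{2/\alpha}}.
\end{align*}

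Next I would cancel the remaining prefactor using Lemma 1: when $\{\widehat{B}_j\}=1$ and $\{\alpha_j\}=\alpha$, equation (\ref{eq:Lem1-0}) gives $\mathcal{A}_k=\lambda_k/\sum_j \lambda_j\widehat{P}_j^{2/\alpha}$, so that $\frac{2\pi\lambda_k}{\mathcal{A}_k}=2\pi\sum_j\lambda_j\widehat{P}_j^{2/\alpha}$, which exactly cancels the denominator above. This yields
\begin{align*}
\mathcal{R}_k = \int_{0}^{\infty} \frac{\mathrm{d}t}{1+\mathcal{Z}(e^t-1,\alpha,1)},
\end{align*}
a quantity independent of $k$. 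Combining this with (\ref{eq:Rate}) and $\sum_{k=1}^{K}\mathcal{A}_k=1$ gives $\mathcal{R}=\mathcal{R}_k\sum_k\mathcal{A}_k=\mathcal{R}_k$, establishing the first identity.

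For the $\alpha=4$ case, I would simply quote the evaluation from the proof of Corollary 4, namely $\mathcal{Z}(\tau,4,1)=\sqrt{\tau}\arctan(\sqrt{\tau})$, and substitute $\tau=e^t-1$ into the integrand. Honestly, the whole proof is essentially a bookkeeping exercise; the only subtlety is recognizing that the $\widehat{\alpha}_j=1$ collapse is what lets one factor $x^{2}$ out of the exponent and evaluate the spatial integral in closed form, so the main pitfall would be botching the cancellation of $\mathcal{A}_k$ with the tier-sum $\sum_j\lambda_j\widehat{P}_j^{2/\alpha}$.
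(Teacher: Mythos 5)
Your proposal is correct and follows essentially the same route as the paper: specialize Theorem 2 using $\widehat{\alpha}_j=1$, evaluate the spatial integral via $v=x^2$, cancel the prefactor with $\frac{\lambda_k}{\mathcal{A}_k}=\sum_j\lambda_j\widehat{P}_j^{2/\alpha}$ from the proof of Lemma 1, observe $k$-independence, and sum over tiers using $\sum_k\mathcal{A}_k=1$. The $\alpha=4$ step by substituting $\mathcal{Z}(\tau,4,1)=\sqrt{\tau}\arctan(\sqrt{\tau})$ also matches the paper exactly.
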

\begin{proof}
If $\{\alpha_j\}=\alpha$ and $\{\widehat{B}_j\}=1$, from (\ref{eq:Rk}) we obtain
\begin{align}\label{eq:Rk3}
\mathcal{R}_{k}(\{\lambda_j\}, \alpha, 1)=\frac{2\pi\lambda_k}{\mathcal{A}_k}\int_{t>0} \int_{x>0} x \exp\left\{ -\pi \textstyle\sum_{j=1}^{K} \lambda_j \widehat{P}_j^{2/\alpha} (1+\mathcal{Z}(e^t -1, \alpha, 1)) x^2  \right\}  \mathrm{d}x \mathrm{d}t,
\end{align}
where $\frac{\lambda_k}{\mathcal{A}_k}=\sum_{j=1}^{K} \lambda_j \widehat{P}_j^{2/\alpha}$ from (\ref{eq:Lem1-4}).
Employing a change of variables $x^2=v$ and $\int_0^\infty e^{-Av}\mathrm{d}v=\frac{1}{A}$ give
\begin{align}\label{eq:Rk13}
\mathcal{R}_{k}(\{\lambda_j\}, \alpha, 1)=\int _{t=0}^\infty \frac{1}{1+\mathcal{Z}(e^t-1, \alpha,1)}\mathrm{d}t,
\end{align}
which is independent on the index $k$. From (\ref{eq:Rate}) and $\sum_{k=1}^K \mathcal{A}_k=1$, we obtain
$\mathcal{R} = \mathcal{R}_{k}\sum_{k=1}^K \mathcal{A}_k=\mathcal{R}_{k}$. This gives the result in (\ref{eq:Rk11}). When $\alpha=4$ and $\{\widehat{B}_j\}=1$, from (\ref{eq:Psucc4}), $\mathcal{Z}(e^t-1, 4, 1)=\sqrt{\tau}
\int_{1/\sqrt{(e^t-1)}}^{\infty} \frac{1}{1+u^{2}}\mathrm{d}u  = \sqrt{(e^t-1)}\arctan(\sqrt{e^t-1})$. Combining with (\ref{eq:Rk11}) gives the desired results.
\end{proof}
In this corollary, the double integration in Theorem 2 is simplified to a single integration, and the integrand is especially simple to compute. The average ergodic rate, like the outage probability, is not affected by BS transmit power $P_j$, BS density $\lambda_j$, and the number of tiers $K$. This means that adding BSs or raising the power increases interference and desired signal power by the same amount, and they offset each other. Therefore, the network sum rate increases in direct proportion to the total number of BSs.

\subsection{Minimum Average User Throughput}
We assume the orthogonal transmission, where equal time (and/or frequency) slots are allocated to each user one after the other in a round-robin manner. The average ergodic rate of the $k$th-tier user $\mathcal{R}_k$ then means the average cell throughput in the $k$th tier. For the $k$th tier, the average user throughput of a cell is given as
\begin{align}\label{eq:AvgRk}
\mathcal{\overline{R}}_{k} = \frac{\mathcal{R}_k}{\mathcal{N}_k},
\end{align}
where $\mathcal{N}_k$ is the average number of user per cell of the $k$th tier, which is given in Lemma 2. Combining this equation with (\ref{eq:Cor1-3}) and (\ref{eq:Rk}) gives
\begin{align}\label{eq:Rbark}
\mathcal{\overline{R}}_{k}=\frac{2\pi\lambda_k^2}{\mathcal{A}_k^2 \lambda^{\mathrm{(u)}}}\int_{x>0} \int_{t>0} \exp \left\{- \tfrac{e^t -1}{\mathtt{SNR}}  -\pi \textstyle\sum_{j=1}^{K} x^{2/\widehat{\alpha}_j} C_{j}(t) \right\} \mathrm{d}t x \mathrm{d}x.
\end{align}
We define the minimum average user throughput as
\begin{align}\label{eq:Q}
\mathcal{Q} \triangleq \min_{k\in\{1,\cdots,K\}} \mathcal{\overline{R}}_{k},
\end{align}
which takes the minimum value among the $K$ values given by (\ref{eq:AvgRk}). The metric represents the minimum quality of service (QoS) that the network can provide. Since it is highly dependent on the number of user in a cell, i.e. cell load, the minimum average user throughput well measures the effect of biasing on the QoS of the HCN.

\section{Numerical Results}
\subsection{Accuracy of Model and Analysis}
We use a path loss at 1 meter of $L_0=-38.5$ dB and thermal noise $W=-104$ dBm (i.e. 10 MHz bandwidth) for all numerical results. We obtain the outage probability (SINR CDF) using Monte Carlo methods where BSs are deployed according to the given model, and a user is fixed at the origin for each network realization.  For each spatial realization, a SINR sample is obtained by generating independent Rayleigh random variables for the fading channels.

Fig. \ref{fig:comparison} compares the outage probability of the proposed PPP BS deployment, an actual tier-1 BS deployment, and a hexagonal grid model for tier 1.  A total of three tiers are modeled with the lower two tiers (e.g. pico and femto BSs) modeled according to a PPP. We observe that the tier-1 PPP model is nearly as accurate as a hexagonal grid model, where the grid model provides lower bound with a gap less than 1 dB from actual BS deployment and the PPP model gives upper bound with less than a 1.5 dB gap from an actual BS deployment. Similar results are also observed for two-tier case in \cite{DhiAndJ11}.
The analytic curves given from (\ref{eq:P}) are remarkably close to the simulated curves for all considered SINR threshold, which is an advantage over \cite{DhiAndJ11} that provides an exact expression for $\tau > 1$.

\subsection{Effect of BS Density, Path Loss Exponent, and Biasing}
We obtained numerical results of outage probability (in Theorem 1), average ergodic rate (in Theorem 2), minimum average user throughput (from (\ref{eq:Rbark}) and (\ref{eq:Q})) with respect to main network parameters; bias factor, BS density, and path loss exponent. Although all results given in this section are for a two-tier HCN (e.g. macrocell and picocell), they can be applied, without loss of generality, for a general $K$-tier HCN. Biasing effect is investigated by adjusting the bias factor of the picocell with no biasing of the macrocell. Since we assume the transmit power of the macro BS 20 dB larger than pico BS, the 20 dB picocell bias factor means the cell association where the user connects to the BS with the lowest path loss (or to the nearest BS for the same path loss exponents).

\textbf{Outage Probability and Average Ergodic Rate for BS density.} In Fig. \ref{fig:DensityEffect}, although considering noise, we observe no changes in outage and rate for adding BS with different power, no biasing, and same path loss exponents. This means that inter-BS interference is still dominant in our two-tier scenario considering a typical value of BS density and BS transmit power, and Corollaries 2 and 5  hold for the HCN. When the low-tier (pico or femto) BSs experience higher path loss, the outage and average rate improves as the BSs are added. This is even more optimistic than the result for the same path loss exponents. Intuitively, higher path loss reduces the interference between picocell and macrocell so the picocell is more isolated from the macrocell network. The results indicate that if given the choice, new BSs are better deployed in an area with higher path loss.

\textbf{Outage probability and Average Ergodic Rate for Biasing.} Fig. \ref{fig:BiasEffect} shows the effect of bias factor and BS density. We apply the same path loss exponent for all tiers. Unbiased association results in the same outage/rate for all BS density as provided in Corollaries 2 and 5. As the picocell bias factor increases, more macro users with low SINR are associated with the picocell, which improves the outage and rate of picocell, but degrades those of the macrocell. In terms of the outage and rate of the overall network, unbiased association always outperforms biasing. Intuitively, in biased association, some users are associated with the BS not offering the strongest received signal, which reduces the SINR of the users. The results are from the condition that every cell is fully traffic-loaded (full queues at all times). Note that for a lightly-loaded HCN, biasing can improve the rate over the whole network even if it reduces the SINR of the users, since their share of the total resources (typically, time and frequency slots) will increase.

In Fig. \ref{fig:BiasEffect}, for a given bias factor, deploying more pico BSs enhances the outage and rate of macrocell (which means a decrease in the outage probability and an increase in the average ergodic rate), because more macro users with low SINR and at cell edge become associated with the picocell. The outage and rate of picocell are also improved, since adding more pico BSs tends to reduce each picocells coverage area, despite the biasing.  Interestingly, the BS density only slightly changes the outage of overall network. Although adding pico BSs increases the interference to macrocells, it also decreases their association probability. These two competing effects more or less cancel.  Similarly, the rate is also not strongly affected by the BS density.

\textbf{Minimum Average User Throughput for Biasing.} In Fig. \ref{fig:MinRate}, as the bias-factor increases, the minimum user rate increases at first but then decreases for a sufficiently large bias factor. Although the macrocell load decreases at the cost of an increase in picocell load, the average user throughput of picocell is still higher than that of macrocell for a smaller bias factor. For a sufficiently large bias factor, the average user throughput in the picocell is lower than that of the macrocell due to a massive number of connections to the picocell. Again, we wish to emphasize that these observed trends on biasing can be considered preliminary, and are heavily dependent on the cell loading. Ideally, it seems that a network should dynamically push users onto lightly loaded cells as a function of the current network conditions, and that in general, small cells will be more lightly loaded unless proactive biasing is introduced.

\section{Conclusions}
This paper developed a new analytical framework for evaluating outage probability and spectral efficiency in HCNs with flexibly cell association, also known as biasing.  It is interesting to observe that the number of tiers and density of base stations at most weakly affects the outage probability and the average ergodic rate, and under certain assumptions, does not affect them. This implies that even randomly adding pico and femtocells to a network for capacity improvement need not decrease the quality of the network, as is commonly feared. Assuming full queues at all base stations, biasing deteriorates the outage and rate of the overall network by lowering the SINR, but further work on this topic is needed, perhaps with the assistance of the model and results developed in this paper.

\section{Acknowledgments}
The authors thank Amitava Ghosh and Bishwarup Mondal of Motorola Solutions (Recently acquired by Nokia Siemens) for their advice regarding the system model and parameters, and H. S. Dhillon for his help in making Fig. \ref{fig:Biasing}.

\appendix[]
\subsection{Proof of Lemma 1}\label{sec:PFLem1}
Denote $n$ as an index of tier associating the typical user.
When $P_{\mathrm{r},k} > P_{\mathrm{r},j}$ for all $j\in \{1 \cdots K\}, j\neq k$, a typical user is associated with the $k$th tier, i.e. $n=k$. Therefore,
\begin{align}\label{eq:Lem1-1}
\mathcal{A}_k&\triangleq\mathbb{P}\left[n=k\right] \cr
&= \mathbb{E}_{R_k}\left[\mathbb{P}\left[P_{\mathrm{r},k}(R_k)> \max_{j, j\neq k} P_{\mathrm{r},j} \right] \right] \cr
&= \mathbb{E}_{R_k}\left[\prod_{j=1, j\neq k}^K \mathbb{P}\left[P_{\mathrm{r},k}(R_k) > P_{\mathrm{r},j}\right]\right]\cr
&\mathop =\limits^{\left( a \right)}
\mathbb{E}_{R_k}\left[\prod_{j=1, j\neq k}^K \mathbb{P}\left[R_j > \left(\tfrac{P_j}{P_k}\tfrac{B_j}{B_k}\right)^{1/\alpha_j}R_k^{\alpha_k/\alpha_j} \right]\right]\cr
&=\int_0^{\infty} {\textstyle\prod}_{j=1, j\neq k}^K \mathbb{P}\left[R_j > (\widehat{P}_j\widehat{B}_j)^{1/\alpha_j}r^{\widehat{\alpha}_j} \right] f_{R_k}(r)\mathrm{d}r,
\end{align}
where $(a)$ is given using (\ref{eq:Pr_j}). $\mathbb{P}[R_j > (\widehat{P}_j \widehat{B}_j)^{1/\alpha_j}r^{\widehat{\alpha}_j} ]$ and the PDF of $R_k$ are derived using the null probability of a 2-D Poisson process with density $\lambda$ in an area $A$, which is $\exp(-\lambda A)$.
\begin{align}\label{eq:Lem1-2}
\prod_{j=1, j\neq k}^K \mathbb{P}\left[R_j > (\widehat{P}_j \widehat{B}_j)^{1/\alpha_j}r^{\widehat{\alpha}_j} \right]&=\prod_{j=1, j\neq k}^K\mathbb{P}[\mathrm{No~ BS~closer ~ than~} (\widehat{P}_j \widehat{B}_j)^{1/\alpha_j}r^{\widehat{\alpha}_j} \mathrm{~in~the~} j\mathrm{~th ~tier}]\cr
&=\prod_{j=1, j\neq k}^K e^{-\pi \lambda_j (\widehat{P}_j\widehat{B}_j)^{2/\alpha_j}r^{2/\widehat{\alpha}_j}}.
\end{align}
and
\begin{align}\label{eq:Lem1-3}
f_{R_k}(r)=1-\frac{\mathrm{d}\mathbb{P}[R_k>r]}{\mathrm{d}r}=e^{-\pi\lambda_k r^2}2\pi \lambda_k r.
\end{align}
Combining (\ref{eq:Lem1-1}), (\ref{eq:Lem1-2}), and (\ref{eq:Lem1-3}), we obtain
\begin{align}\label{eq:Lem1-3-1}
\mathcal{A}_k = 2\pi \lambda_k \int_0^{\infty} r \exp \left\{-\pi \textstyle\sum_{j=1, j\neq k}^{K} \lambda_j (\widehat{P}_j \widehat{B}_j)^{2/\alpha_j} r^{2/\widehat{\alpha}_j} -\pi \lambda_k r^2 \right \} \mathrm{d}r.
\end{align}
Since $\widehat{P}_j=1$, $\widehat{B}_j=1$, and $\widehat{\alpha}_j=1$ for $j=k$, we obtain $\sum_{j=1, j\neq k}^{K} \lambda_j (\widehat{P}_j \widehat{B}_j)^{2/\alpha_j} r^{2/\widehat{\alpha}_j} + \lambda_k r^2= \sum_{j=1}^{K} \lambda_j (\widehat{P}_j \widehat{B}_j)^{2/\alpha_j} r^{2/\widehat{\alpha}_j}$. From (\ref{eq:Lem1-3-1}), we thus obtain the desired result in (\ref{eq:Lem1}).

If $\{\alpha_j\}=\alpha$, then we obtain $\widehat{\alpha}_j=1$ for all $j\in\{1,\cdots,K\}$. This gives
\begin{align}
\mathcal{A}_k = 2\pi \lambda_k \int_0^{\infty} r \exp \left\{-\pi \textstyle\sum_{j=1}^{K} \lambda_j (\widehat{P}_j \widehat{B}_j)^{2/\alpha} r^{2} \right\} \mathrm{d}r.
\end{align}
Employing the change of variables $r^2=t$ gives
\begin{align}\label{eq:Lem1-4}
\mathcal{A}_k &= \frac{\lambda_k} {\sum_{j=1}^{K} \lambda_j (\widehat{P}_j \widehat{B}_j)^{2/\alpha}} \cr
&= \frac{\lambda_k} {\sum_{j=1,j\neq k}^{K} \lambda_j (\widehat{P}_j \widehat{B}_j)^{2/\alpha} + \lambda_k}.
\end{align}
By applying $\widehat{P}_j\triangleq\frac{P_j}{P_k}$ and $\widehat{B}_j\triangleq\frac{B_j}{B_k}$ to (\ref{eq:Lem1-4}), we obtain the simple result in (\ref{eq:Lem1-0}).

\subsection{Proof of Theorem 1}\label{sec:PFThm1}
From (\ref{eq:PoutDef}), the outage probability of the $k$th tier is given as
\begin{align}\label{eq:Thm1-Pout}
\mathcal{O}_k
&= 1-\int_{x=0}^\infty \mathbb{P}[\mathtt{SINR}_k(x) > \tau]f_{X_k}(x )\mathrm{d}x
\cr
&= 1-\frac{2\pi\lambda_k }{\mathcal{A}_k}\int_{x=0}^\infty \mathbb{P}[\mathtt{SINR}_k(x) > \tau] x \exp \left\{-\pi\textstyle\sum_{j=1}^{K} \lambda_j (\widehat{P}_j \widehat{B}_j)^{2/\alpha_j} x^{2/\widehat{\alpha}_j} \right \} \mathrm{d}x
,
\end{align}
where $f_{X_k}(x)$ is given in Lemma 2.
The user SINR in (\ref{eq:SINR}) is rewritten as $\gamma(x) = \frac{g_{k,0}}{x^{\alpha_k}P_k^{-1} Q}$,
where $Q=\sum_{j=1}^K I_j+W/L_0$
The CCDF of the user SINR at distance $x$ from its associated BS in $k$th tier is given as
\begin{align}\label{eq:Psucc1}
\mathbb{P}[\gamma_k(x) > \tau]&=\mathbb{P}[g_{k,0} > x^{\alpha_k}P_k^{-1} \tau Q] \cr
&=\int_0^\infty \exp \left\{-x^{\alpha_k}P_k^{-1} \tau q \right\} f_Q(q) \mathrm{d}q\cr
&= \mathbb{E}_Q[\exp \left\{-x^{\alpha_k}P_k^{-1} \tau q \right\}]\cr
&= \exp \left\{-\frac{\tau}{\mathtt{SNR}}\right\}\prod_{j=1}^K \mathcal{L}_{I_j}(x^{\alpha_k}P_k^{-1} \tau )
\end{align}
where $\mathtt{SNR}=\frac{P_k L_0 x^{-\alpha_k}}{W}$, and the Laplace transform of $I_j$ is
\begin{align}\label{eq:Psucc2}
\mathcal{L}_{I_j}(x^{\alpha_k}P_k^{-1} \tau )&=\mathbb{E}_{I_j}[e^{-x^{\alpha_k}P_k^{-1} \tau I_j}]\cr
&=\mathbb{E}_{\Phi_j}\left[\exp\left\{-x^{\alpha_k}\widehat{P}_j \tau \sum_{i \in \Phi_j } h_{j,i} |Y_{j,i}|^{-\alpha_j}\right\}\right]\cr
&\mathop =\limits^{\left( a \right)}
\exp \left\{ -2\pi\lambda_j\int_{z_j}^\infty\left(1-\mathcal{L}_{h_j}(x^{\alpha_k}\widehat{P}_j \tau y^{-\alpha_j} )\right) y \mathrm{d}y\right\}
\cr
&\mathop =\limits^{\left( b \right)}
\exp \left\{ -2\pi\lambda_j\int_{z_j}^\infty\left(1-\frac{1}{1+ x^{\alpha_k}\widehat{P}_j \tau y^{-\alpha_j}}\right) y \mathrm{d}y\right\}\cr
&=\exp\left\{-2\pi\lambda_j\int_{z_j}^\infty \frac{y}{1+(x^{\alpha_k} \widehat{P}_j \tau )^{-1}  y^{\alpha_j}} \mathrm{d}y\right\},
\end{align}
where (a) is given from \cite{AndBac10}, and (b) follows because interference fading power $h_j \sim \exp(1)$. The integration limits are from $z_j$ to $\infty$ since the closest interferer in $j$th tier is at least at a distance $z_j=(\widehat{P}_j\widehat{B}_j)^{1/\alpha_j}x^{\widehat{\alpha}_j}$. Employing a change of variables $u=(x^{\alpha_k} \widehat{P}_j \tau )^{-2/\alpha_j}y^2$ results in
\begin{align}\label{eq:Psucc3}
\mathcal{L}_{I_j}(x^{\alpha_k}P_k^{-1} \tau )=\exp \left\{ -\pi\lambda_j \widehat{P}_j ^{2/\alpha_j} \mathcal{Z}(\tau, \alpha_j, \widehat{B}_j) x^{2/\widehat{\alpha}_j} \right\},
\end{align}
where
\begin{align}\label{eq:Psucc4}
\mathcal{Z}(\tau, \alpha_j, \widehat{B}_j)&=\tau^{\frac{2}{\alpha_j}}
\int_{( \widehat{B}_j / \tau )^{2/\alpha_j}}^{\infty} \frac{1}{1+u^{\alpha_j/2}}du \cr
&=\frac{2\tau \widehat{B}_j^{2/\alpha_j-1}}{\alpha_j-2} ~_{2}F_1\left[1, 1\!-\!\tfrac{2}{\alpha_j}; 2\!-\!\tfrac{2}{\alpha_j};  -\tfrac{\tau} {\widehat{B}_j} \right] \mathrm{~for~} \alpha_j>2
\end{align}
Here, $_2F_1[\cdot]$ denotes the Gauss hypergeometric function.
Plugging (\ref{eq:Psucc3}) into (\ref{eq:Psucc1}) gives
\begin{equation}\label{eq:Thm1-Psucc}
\mathbb{P}[\gamma_k(x) > \tau] =  \exp \left\{-\frac{\tau }{\mathtt{SNR}} - \pi \sum_{j=1}^{K} \lambda_j  \widehat{P}_j^{2/\alpha_j}  \mathcal{Z}(\tau, \alpha_j, \widehat{B}_j) x^{2/\widehat{\alpha}_j} \right \}.
\end{equation}
Combining (\ref{eq:Thm1-Pout}) and (\ref{eq:Thm1-Psucc}), we obtain the per-tier outage probability in (\ref{eq:Pk}). Furthermore, plugging (\ref{eq:Pk}) into (\ref{eq:Pout}) gives the network outage probability in (\ref{eq:P}).

\subsection{Proof of Theorem 2}\label{sec:PFThm1}
From (\ref{eq:RateDef}), the average ergodic rate of the $k$th tier is
\begin{align}\label{eq:Thm2-Rate}
\mathcal{R}_{k}
&=\int_{0}^{\infty} \mathbb{E}_{\mathtt{SINR}_k}\left[\ln(1+\mathtt{SINR}_k(x))\right] f_{X_k}(x)\mathrm{d}x\cr
&=\frac{2\pi\lambda_k }{\mathcal{A}_k}\int_{0}^{\infty} \mathbb{E}_{\mathtt{SINR}_k}\left[\ln(1+\mathtt{SINR}_k(x))\right] x \exp \left\{-\pi\textstyle\sum_{j=1}^{K} \lambda_j (\widehat{P}_j \widehat{B}_j)^{2/\alpha_j} x^{2/\widehat{\alpha}_j} \right \} \mathrm{d}x,
\end{align}
where $f_{X_k}(x)$ is given in Lemma 2.
Since $\mathbb{E}[X]=\int_0^{\infty}
\mathbb{P}[X>x]dx$ for $X>0$, we obtain
\begin{align}\label{eq:Thm2-1}
\mathbb{E}_{\mathtt{SINR}_k}\left[\ln(1+\mathtt{SINR}_k(x))\right]
&=\int_{0}^{\infty} \mathbb{P}\left[\ln(1+\mathtt{SINR}_k(x))>t \right] \mathrm{d}t \cr
&=\int_{0}^{\infty} \mathbb{P}\left[ g_{k,0} > x^{\alpha_k}P_k^{-1}Q (e^t-1)\right] \mathrm{d}t
\cr
&\mathop = \limits^{\left( a \right)}
\int_{0}^{\infty} e^{-\frac{e^t-1}{\mathtt{SNR}}}\prod_{j=1}^K \mathcal{L}_{I_j}(x^{\alpha_k}P_k^{-1} (e^t-1)) \mathrm{d}t,
\end{align}
where $(a)$ follows from plugging $\tau = e^t-1$ in (\ref{eq:Psucc1}). From (\ref{eq:Psucc3}), we obtain
\begin{align}\label{eq:Thm2-2}
\mathcal{L}_{I_j}(x^{\alpha_k}P_k^{-1} (e^t-1) )=\exp \left\{ -\pi\lambda_j \widehat{P}_j ^{2/\alpha_j} \mathcal{Z}(e^t-1, \alpha_j, \widehat{B}_j) x^{2/\widehat{\alpha}_j} \right\},
\end{align}
with
\begin{align}\label{eq:Thm2-3}
\mathcal{Z}(e^t-1, \alpha_j, \widehat{B}_j)&=(e^t-1)^{\frac{2}{\alpha_j}}
\int_{( \widehat{B}_j/(e^t-1)  )^{2/\alpha_j}}^{\infty} \frac{1}{1+u^{\alpha_j/2}}du \cr
&=\frac{2(e^t-1) \widehat{B}_j^{2/\alpha_j-1}}{\alpha_j-2} ~_{2}F_1\left[1, 1\!-\!\tfrac{2}{\alpha_j}; 2\!-\!\tfrac{2}{\alpha_j};  \tfrac{1-e^t}{\widehat{B}_j} \right].
\end{align}
Plugging (\ref{eq:Thm2-2}) into (\ref{eq:Thm2-1}) gives
\begin{align}\label{eq:Thm2-4}
\mathbb{E}_{\mathtt{SINR}_k}\left[\ln(1+\mathtt{SINR}_k(x))\right]=\int_{0}^{\infty} \exp \left\{-\frac{e^t-1}{\mathtt{SNR}}-\pi\sum_{j=1}^K \lambda_j \widehat{P}_j ^{2/\alpha_j} x^{2/\widehat{\alpha}_j} \mathcal{Z}(e^t-1, \alpha_j, \widehat{B}_j) \right\}  \mathrm{d}t.
\end{align}

Combining (\ref{eq:Thm2-Rate}) and (\ref{eq:Thm2-4}), we obtain the average ergodic rate of the $k$th tier in (\ref{eq:Rk}). Furthermore, plugging (\ref{eq:Rk}) into (\ref{eq:Rate}) gives the average ergodic rate of entire network in (\ref{eq:R}).
\bibliographystyle{IEEEtran}
\bibliography{IEEEabrv,HJO}

\begin{thebibliography}{10}
\providecommand{\url}[1]{#1}
\csname url@samestyle\endcsname
\providecommand{\newblock}{\relax}
\providecommand{\bibinfo}[2]{#2}
\providecommand{\BIBentrySTDinterwordspacing}{\spaceskip=0pt\relax}
\providecommand{\BIBentryALTinterwordstretchfactor}{4}
\providecommand{\BIBentryALTinterwordspacing}{\spaceskip=\fontdimen2\font plus
\BIBentryALTinterwordstretchfactor\fontdimen3\font minus
  \fontdimen4\font\relax}
\providecommand{\BIBforeignlanguage}[2]{{%
\expandafter\ifx\csname l@#1\endcsname\relax
\typeout{** WARNING: IEEEtran.bst: No hyphenation pattern has been}%
\typeout{** loaded for the language `#1'. Using the pattern for}%
\typeout{** the default language instead.}%
\else
\language=\csname l@#1\endcsname
\fi
#2}}
\providecommand{\BIBdecl}{\relax}
\BIBdecl

\bibitem{Pico97}
X.~Lagrange, ``Multitier cell design,'' \emph{{IEEE} Commun. Mag.}, vol.~35,
  no.~8, pp. 60 --64, aug 1997.

\bibitem{ChaAndGat08}
V.~Chandrasekhar, J.~G. Andrews, and A.~Gatherer, ``Femtocell networks: a
  survey,'' \emph{{IEEE} Commun. Mag.}, vol.~46, no.~9, pp. 59--67, Sep. 2008.

\bibitem{JoYook09}
H.-S. Jo, C.~Mun, J.~Moon, and J.-G. Yook, ``Interference mitigation using
  uplink power control for two-tier femtocell networks,'' \emph{{IEEE} Trans.
  Wireless Commun.}, vol.~8, no.~10, pp. 4906--4910, october 2009.

\bibitem{JoYook10}
------, ``Self-optimized coverage coordination in femtocell networks,''
  \emph{{IEEE} Trans. Wireless Commun.}, vol.~9, no.~10, pp. 2977--2982, Oct.
  2010.

\bibitem{Relay09}
J.~Sydir and R.~Taori, ``An evolved cellular system architecture incorporating
  relay stations,'' \emph{{IEEE} Commun. Mag.}, vol.~47, no.~6, pp. 115--121,
  Jun. 2009.

\bibitem{ParDah11}
S.~Parkvall, A.~Furuskar, and E.~Dahlman, ``Evolution of {LTE} toward
  {IMT}-advanced,'' \emph{{IEEE} Commun. Mag.}, vol.~49, no.~2, pp. 84 --91,
  Feb. 2011.

\bibitem{QcommHetNet}
``{LTE} advanced: Heterogeneous networks,'' Qualcomm,
  http://www.qualcomm.com/documents/lte-advanced-heterogeneous-networks-0,
  2011.

\bibitem{ChiLin04}
C.-S. Chiu and C.-C. Lin, ``Comparative downlink shared channel evaluation of
  {WCDMA} release 99 and {HSDPA},'' in \emph{Networking, Sensing and Control,
  IEEE International Conference on}, vol.~2, 2004, pp. 1165--1170.

\bibitem{JaiTam08}
R.~Jain, C.~So-In, and A.-k. Al~Tamimi, ``System-level modeling of {IEEE}
  802.16e mobile wimax networks: Key issues,'' \emph{Wireless Communications,
  IEEE}, vol.~15, no.~5, pp. 73--79, Oct. 2008.

\bibitem{IkuRup10}
J.~Ikuno, M.~Wrulich, and M.~Rupp, ``System level simulation of {LTE}
  networks,'' in \emph{IEEE Vehicular Technology Conference}, May 2010, pp.
  1--5.

\bibitem{WuSze99}
J.-S. Wu, J.-K. Chung, and M.-T. Sze, ``Analysis of uplink and downlink
  capacities for two-tier cellular system,'' \emph{IEE Proc.-Commun.}, vol.
  144, no.~6, pp. 405--411, Dec. 1997.

\bibitem{Kar99}
R.~Karlsson, ``Radio resource sharing and capacity of some multiple access
  methods in hierarchical cell structures,'' in \emph{IEEE Vehicular Technology
  Conference}, Sep. 1999, pp. 2825--2829.

\bibitem{Fleming97}
P.~Fleming, A.~L. Stolyar, and B.~Simon, ``Closed-form expressions for
  other-cell interference in cellular {CDMA},'' \emph{Technical Report 116,
  University of Colorado at Boulder}, Dec. 1997, available at:
  http://ccm.ucdenver.edu/reports/.

\bibitem{BacCellular}
F.~Baccelli, M.~Klein, M.~Lebourges, and S.~Zuyev, ``Stochastic geometry and
  architecture of communication networks,'' \emph{J. Telecommunication
  Systems}, no.~1, pp. 209--227, 1997.

\bibitem{Bro00}
T.~Brown, ``Cellular performance bounds via shotgun cellular systems,''
  \emph{{IEEE} J. Sel. Areas Commun.}, vol.~18, no.~11, pp. 2443--2455, Nov.
  2000.

\bibitem{AndBac10}
J.~G. Andrews, F.~Baccelli, and R.~K. Ganti, ``A tractable approach to coverage
  and rate in cellular networks,'' \emph{{IEEE} Trans. Commun.}, submitted,
  available at: arxiv.org/abs/1009.0516.

\bibitem{ChaAnd09a}
V.~Chandrasekhar and J.~G. Andrews, ``Uplink capacity and interference
  avoidance for two-tier femtocell networks,'' \emph{{IEEE} Trans. Wireless
  Commun.}, vol.~8, no.~7, pp. 3498--3509, July 2009.

\bibitem{ChaAnd09b}
------, ``Spectrum allocation in two-tier networks,'' \emph{{IEEE} Trans.
  Commun.}, vol.~57, no.~10, pp. 3059--3068, Oct. 2009.

\bibitem{ChaKou09}
V.~Chandrasekhar, M.~Kountouris, and J.~G. Andrews, ``Coverage in multi-antenna
  two-tier networks,'' \emph{{IEEE} Trans. Wireless Commun.}, vol.~8, no.~10,
  pp. 5314--5327, Oct. 2009.

\bibitem{BacNOW}
F.~Baccelli and B.~Blaszczyszyn, \emph{Stochastic Geometry and Wireless
  Networks}.\hskip 1em plus 0.5em minus 0.4em\relax NOW: Foundations and Trends
  in Networking, 2010.

\bibitem{HaeAnd09}
M.~Haenggi, J.~G. Andrews, F.~Baccelli, O.~Dousse, and M.~Franceschetti,
  ``Stochastic geometry and random graphs for the analysis and design of
  wireless networks,'' \emph{{IEEE} J. Sel. Areas Commun.}, vol.~27, no.~7, pp.
  1029--46, Sep. 2009.

\bibitem{AndGan10}
J.~G. Andrews, R.~K. Ganti, N.~Jindal, M.~Haenggi, and S.~Weber, ``A primer on
  spatial modeling and analysis in wireless networks,'' \emph{{IEEE} Commun.
  Mag.}, vol.~48, no.~11, pp. 156--163, Nov. 2010.

\bibitem{WinPin09}
M.~Z. Win, P.~C. Pinto, and L.~A. Shepp, ``A mathematical theory of network
  interference and its applications,'' \emph{Proceedings of the IEEE}, vol.~97,
  no.~2, pp. 205--230, Feb. 2009.

\bibitem{DhiAnd11}
H.~S. Dhillon, R.~K. Ganti, and J.~G. Andrews, ``A tractable framework for
  coverage and outage in heterogeneous cellular networks,'' in
  \emph{Information Theory and Applications Workshop}, Feb. 2011.

\bibitem{DhiAndJ11}
H.~S. Dhillon, R.~K. Ganti, F.~Baccelli, and J.~G. Andrews, ``Modeling and
  analysis of k-tier downlink heterogeneous cellular networks,'' \emph{{IEEE}
  J. Sel. Areas Commun.}, submitted, available at: arxiv.org/abs/1103.2177.

\bibitem{BiasAlcatel10}
``D{L} pico/macro {H}et{N}et performance: Cell selection,'' Alcatel-Lucent,
  R1-100945, 3GPP RAN1 Meeting 60, Feb. 2010.

\bibitem{BiasHuawei10}
``Evaluation of rel-8/9 techniques and range expansion for macro and outdoor
  hotzone,'' Huawei, R1-100945, 3GPP RAN1 Meeting 61, May 2010.

\bibitem{BiasNTT10}
``Performance of e{ICIC} with control channel coverage limitation,'' NTT
  DOCOMO, R1-103264, 3GPP RAN1 Meeting 61, May 2010.

\bibitem{Sal_PIMRC10}
A.~Saleh, O.~Bulakci, S.~Redana, B.~Raaf, and J.~Hamalainen, ``Enhancing
  {LTE}-advanced relay deployments via biasing in cell selection and handover
  decision,'' in \emph{IEEE International Symposium on PIMRC, 2010}, Sep. 2010,
  pp. 2277--2281.

\bibitem{Muk11a}
S.~Mukherjee, ``{UE} coverage in {LTE} macro network with mixed {CSG} and open
  access femto overlay,'' in \emph{IEEE ICC 2011}, Jun. appeared, 2011.

\bibitem{KimHon10b}
Y.~Kim, S.~Lee, and D.~Hong, ``Performance analysis of two-tier femtocell
  networks with outage constraints,'' \emph{{IEEE} Trans. Wireless Commun.},
  vol.~9, no.~9, pp. 2695 --2700, Sep. 2010.

\bibitem{3GPP36814}
``Further advancements for {E-UTRA}, {P}hysical layer aspects, (release 9),''
  3GPP TR 36.814, Mar. 2010.

\end{thebibliography}
\newpage
\begin{figure}
\begin{center}
   \includegraphics[width=5in]{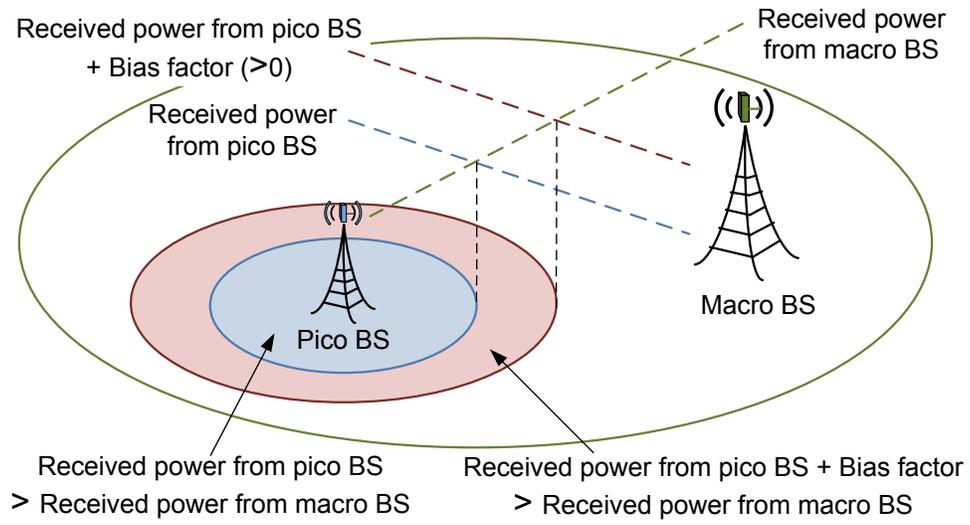}
    \caption{Concept of flexible cell association with positive bias factor in a two-tier macro/pico HCN.}
    \label{fig:BiasingView}
\end{center}
\end{figure}

\begin{figure}
\begin{center}
   \includegraphics[width=3.5 in]{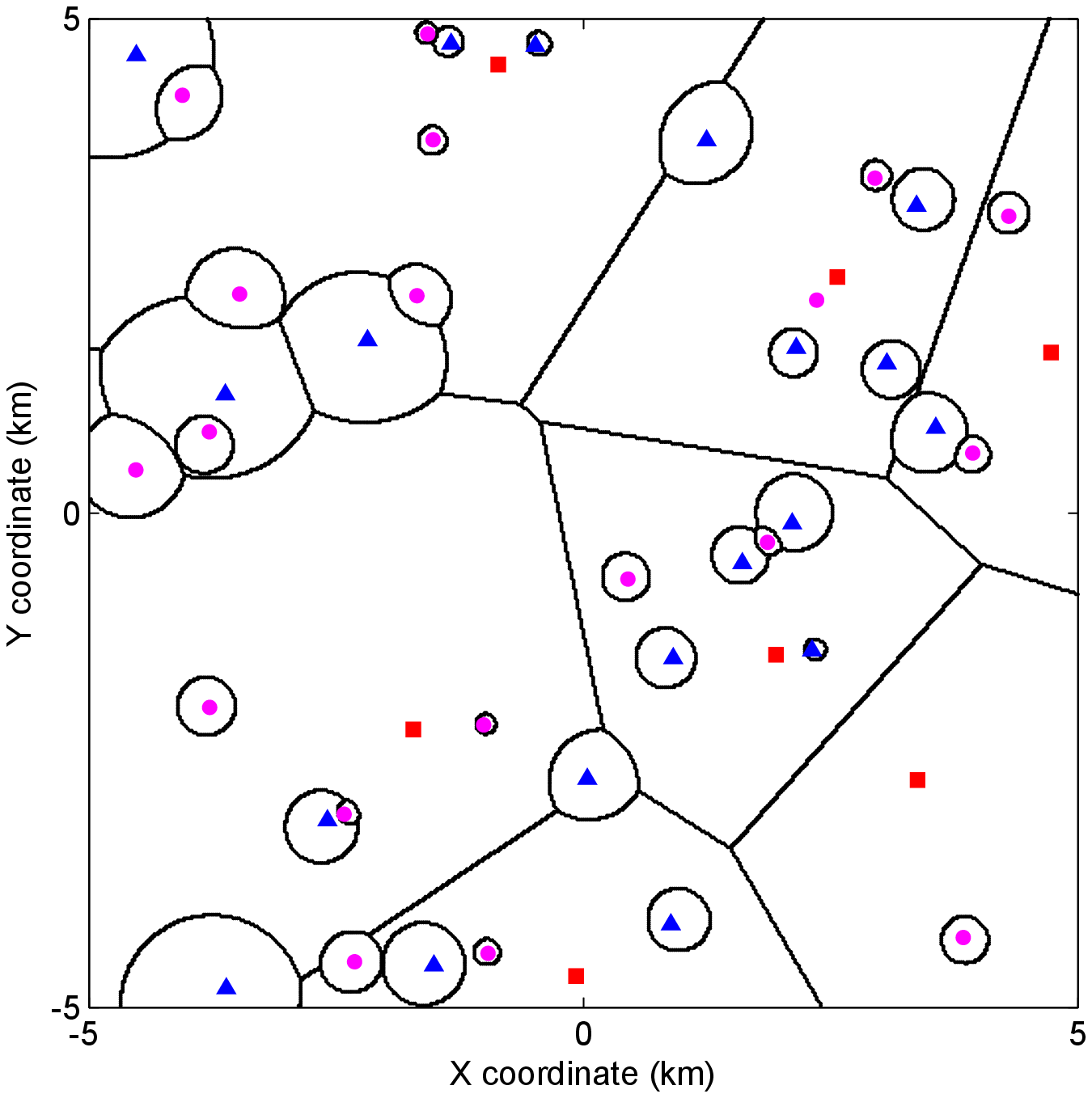}\\\vspace{0.4 in}
   \includegraphics[width=3.5 in]{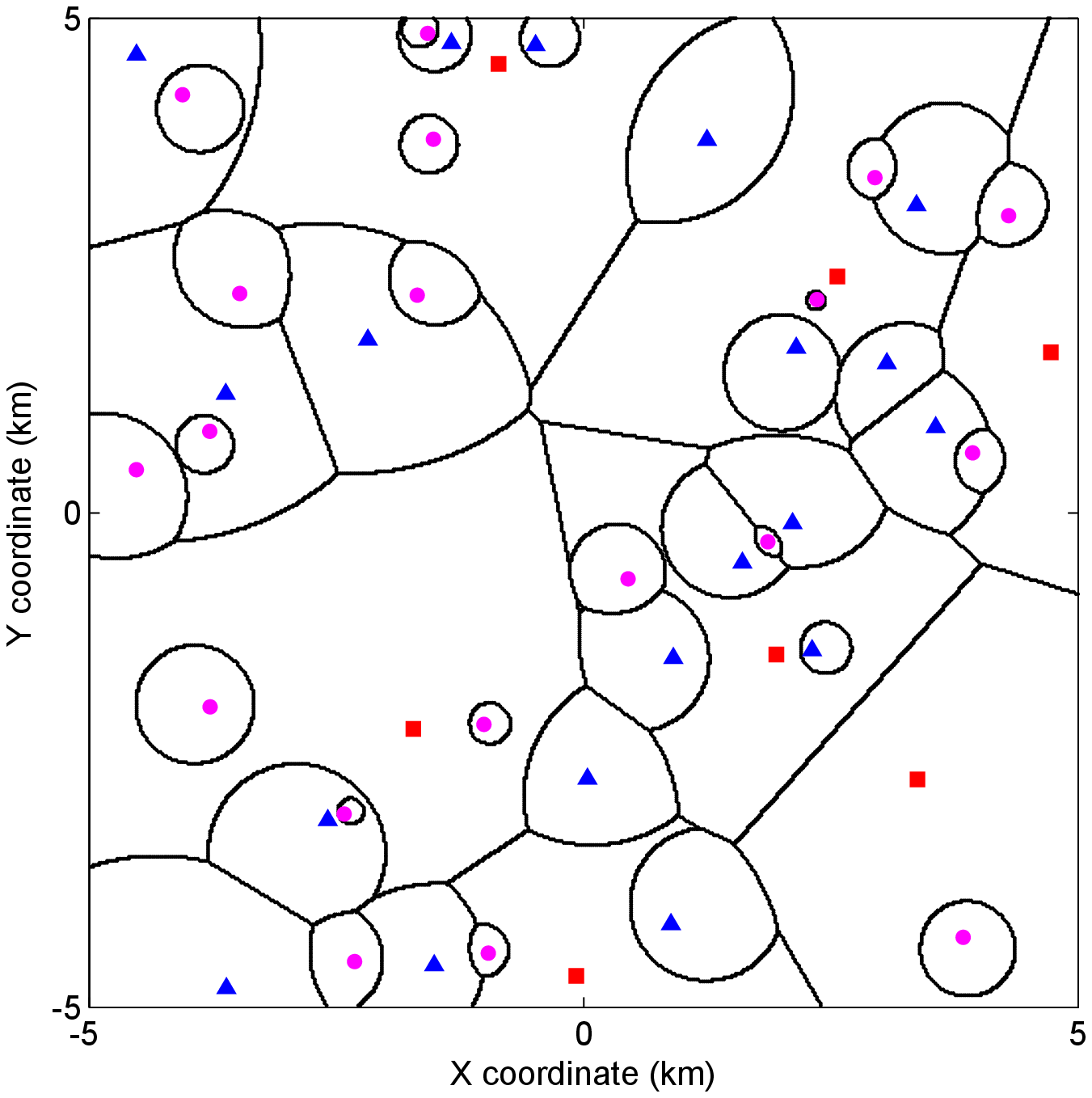}
    \caption{Example of downlink HCNs with three tiers of BSs: a high-power macrocell BS (red square) is overlaid with successively denser and lower power picocells (blue triangle) and femtocells (magenta circle). Both figures have the same BS position. Black lines show the border of cell coverage. The pico/femtocell with positive biasing (lower figure) gives a larger coverage than no biasing (upper figure). }
    \label{fig:Biasing}
\end{center}
\end{figure}

\begin{figure}
\begin{center}
   \includegraphics[width=4 in]{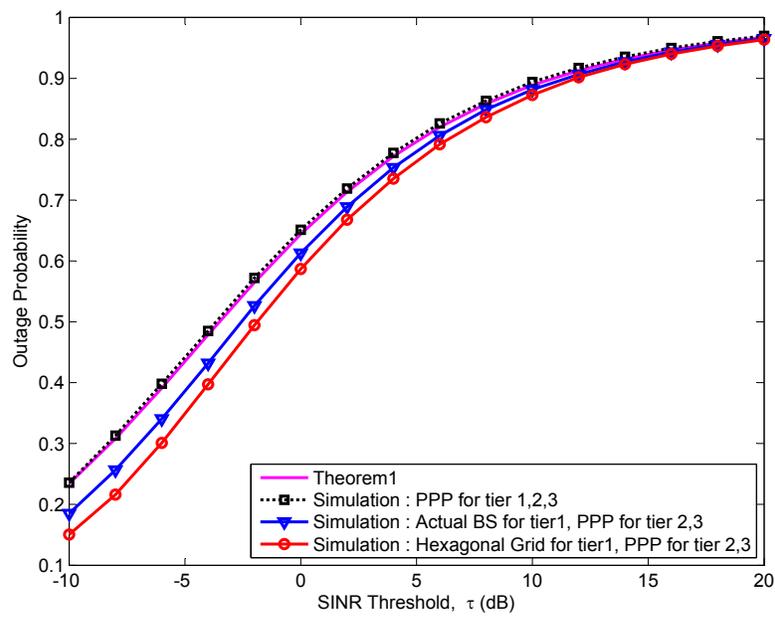}
    \caption{Outage probability comparison in a three-tier HCN ( $\lambda_2=2\lambda_1,\lambda_3=20\lambda_1$, $\{B_1, B_2, B_3\}=\{1,1,1\}$, $\{\alpha_1, \alpha_2, \alpha_3\}=\{3.8, 3.5, 4\}$, and $\{P_1, P_2, P_3\}=\{53, 33, 23\}$ dBm, ).}
    \label{fig:comparison}
\end{center}
\end{figure}

\begin{figure}
\begin{center}
   \includegraphics[width=4 in]{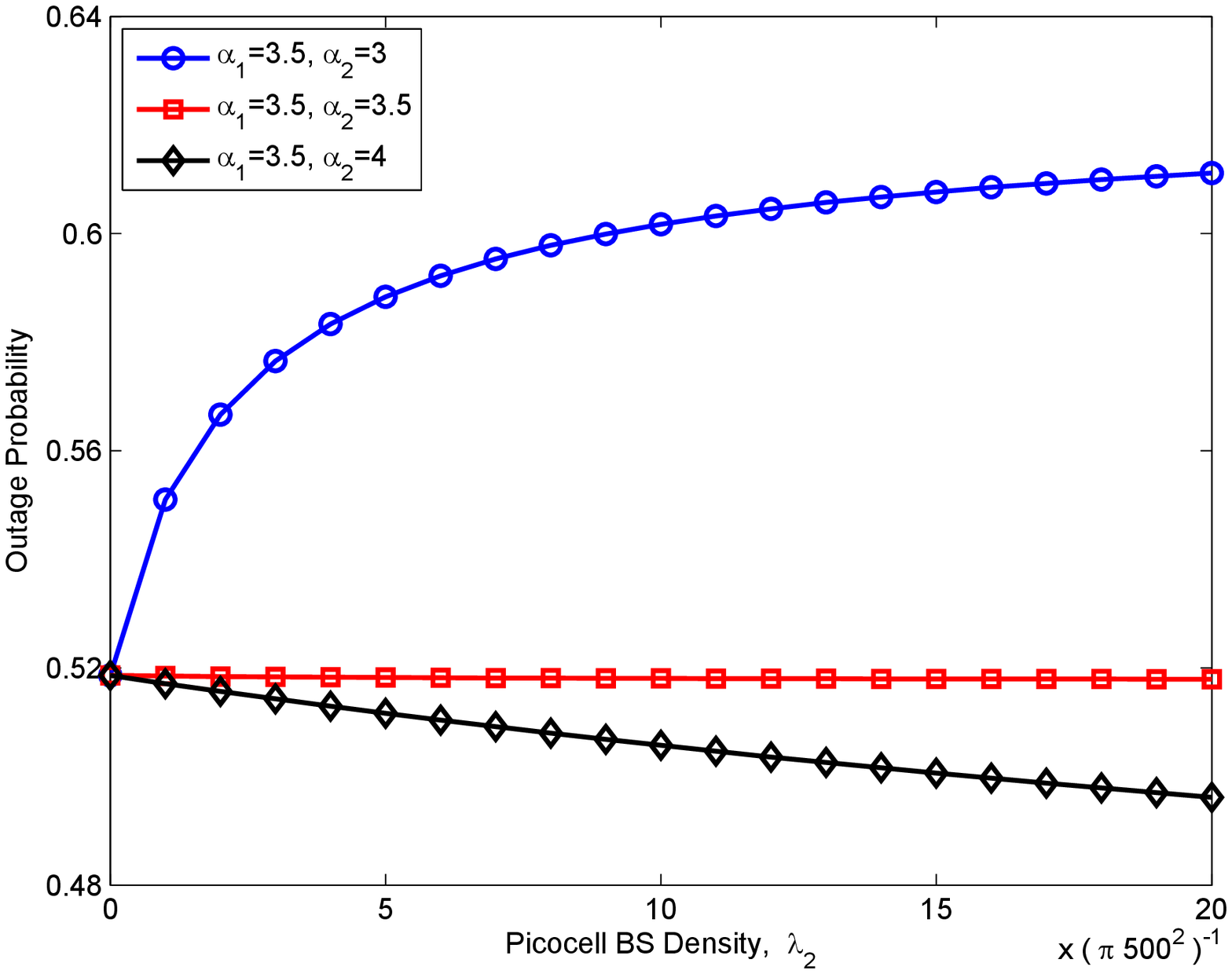}\\\vspace{0.4 in}
   \includegraphics[width=4 in]{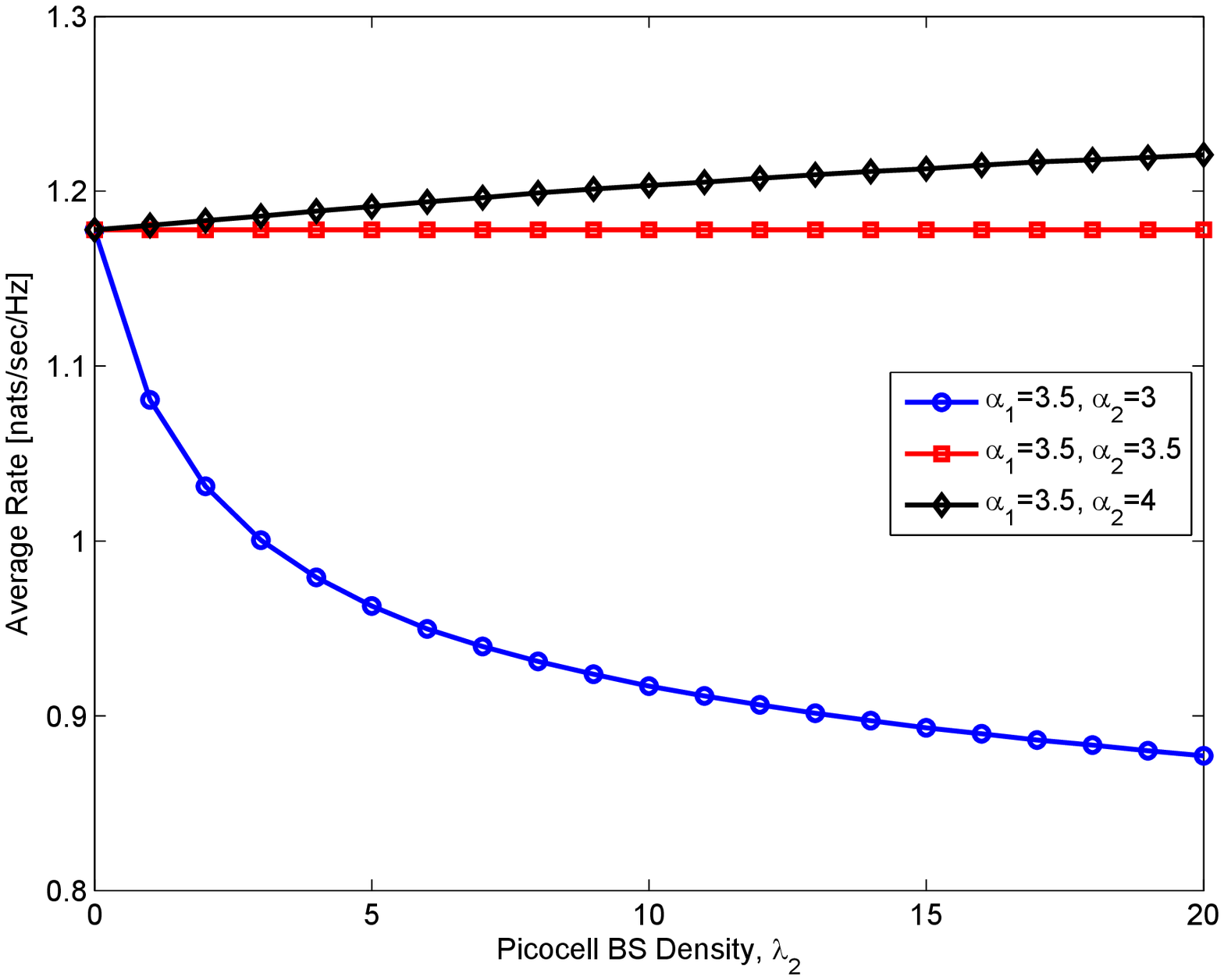}
   \caption{Outage probability and average ergodic rate for varying density of picocells in a two-tier HCN ($K=2$, $\lambda_1=\frac{1}{\pi 500^2}$, $\{B_1, B_2\}=\{1,1\}$, and $\{P_1, P_2\}=\{53, 33\}$ dBm).}
    \label{fig:DensityEffect}
\end{center}
\end{figure}

\begin{figure}
\begin{center}
   \includegraphics[width=4 in]{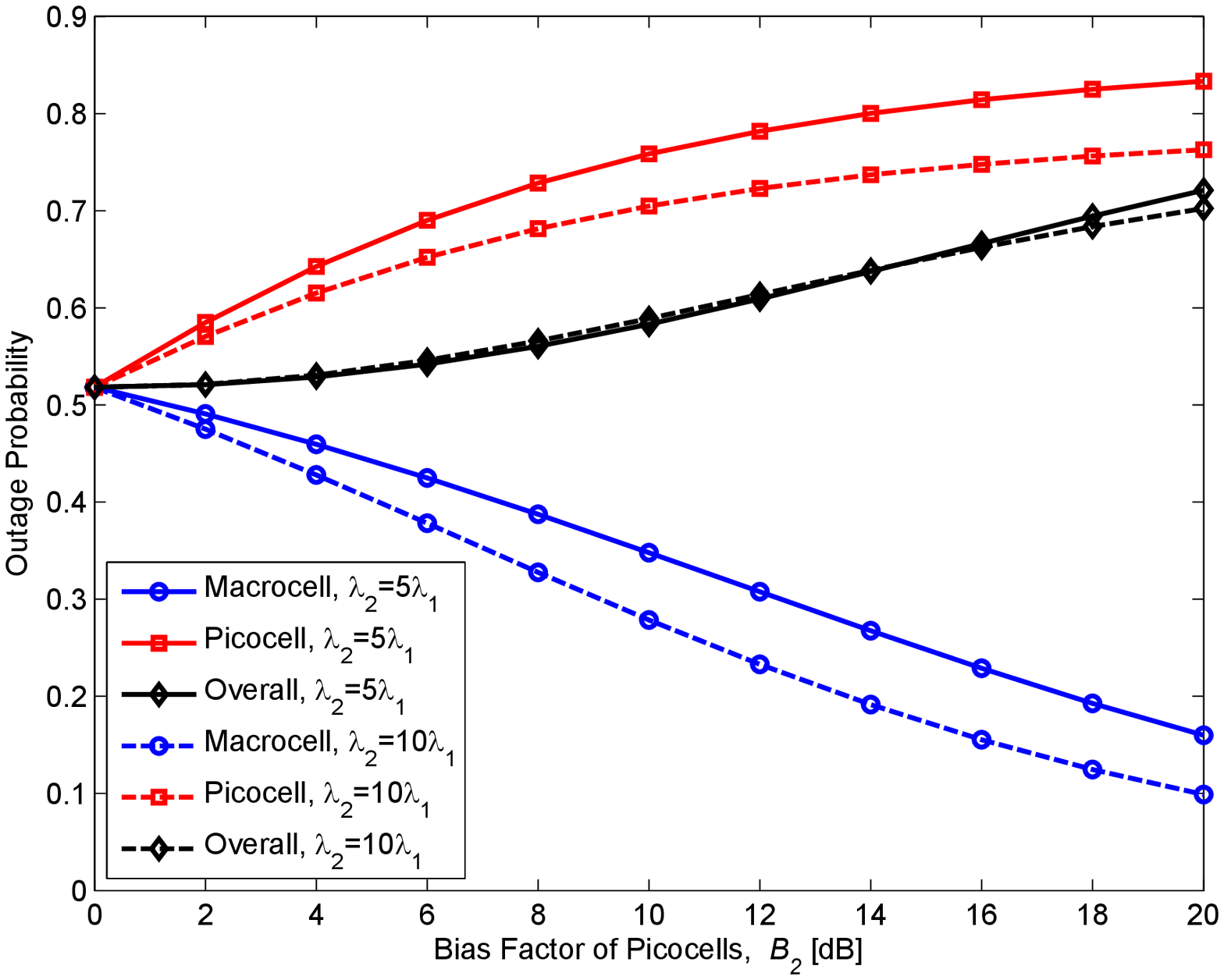}\\\vspace{0.4 in}
   \includegraphics[width=4 in]{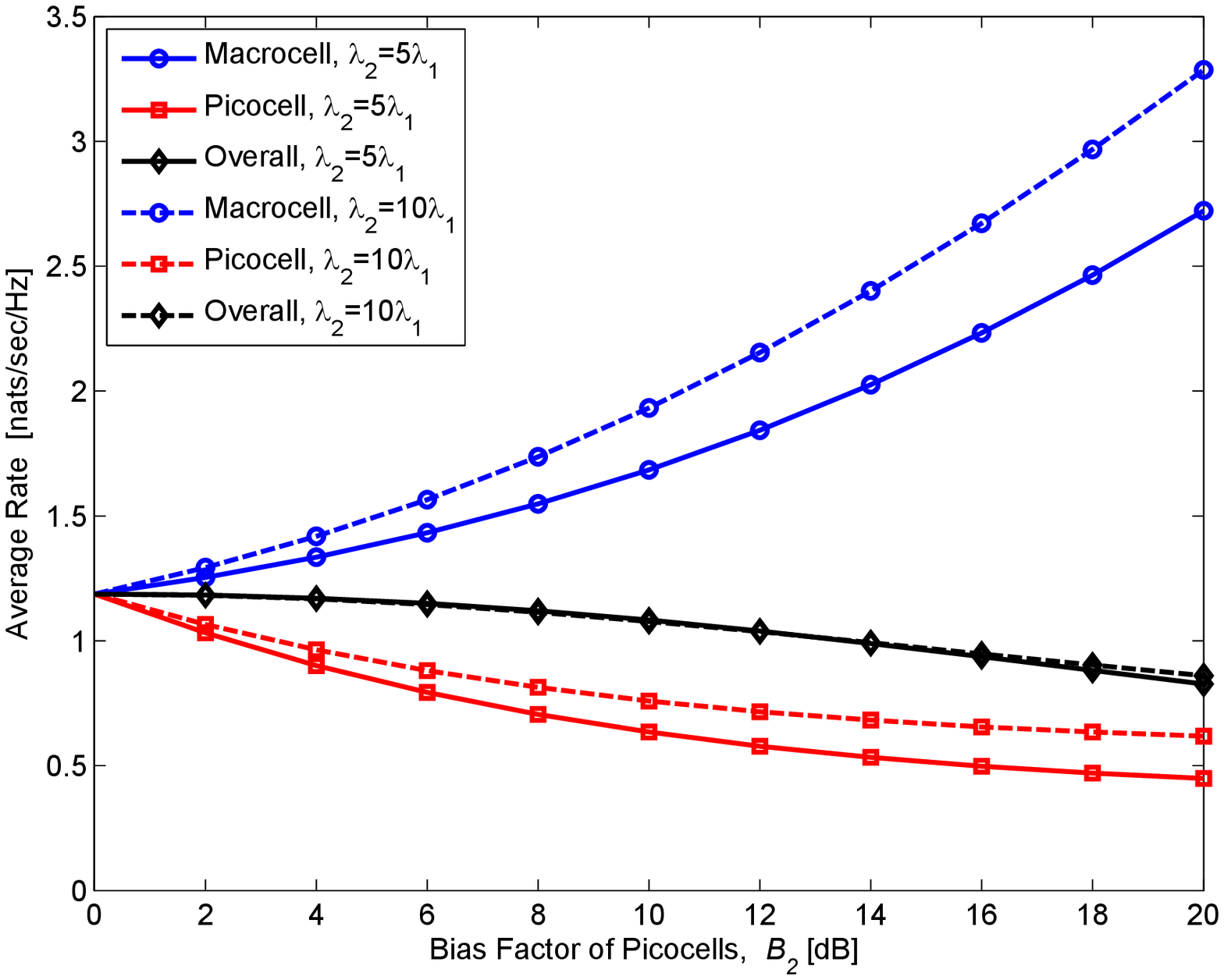}
    \caption{Outage probability and average ergodic rate for varying bias factor of picocells in a two-tier HCN ($K=2$, $B_1=1$, $\lambda_1=\frac{1}{\pi 500^2}$, $\{\alpha_1, \alpha_2\}=\{3.5, 3.5\}$, and $\{P_1, P_2\}=\{53, 33\}$ dBm).}
    \label{fig:BiasEffect}
\end{center}
\end{figure}

\begin{figure}
\begin{center}
   \includegraphics[width=4 in]{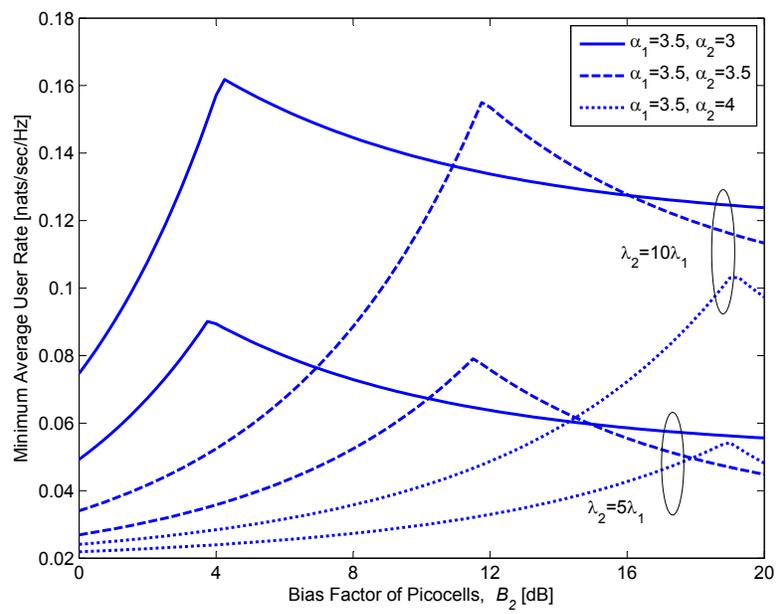}
    \caption{Minimum average user throughput for varying bias-factor and density of picocells in a two-tier HCN ($K=2$, $B_1=1$, $\lambda_1=\frac{1}{\pi 500^2}$, $\{P_1, P_2\}=\{53, 33\}$ dBm).}
    \label{fig:MinRate}
\end{center}
\end{figure}
\end{document}